\newtheorem{theorem}{Theorem}
\newtheorem{corollary}{Corollary}
\newtheorem{lemma}[theorem]{Lemma}
\newtheorem{claim}{Claim}[theorem]
\newtheorem{question}{Question}
\newcommand{\cA}{\mathcal{A}}
\newcommand{\cG}{\mathcal{G}}
\newcommand{\cP}{\mathcal{P}}
\newcommand{\cS}{\mathcal{S}}
\newcommand{\mms}{\mathsf{mms}}
\newcommand{\pmms}{\mathsf{pmms}}
\newcommand{\BibTeX}{B\kern-.05em{\sc i\kern-.025em b}\kern-.08em\TeX}
\begin{document}


\begin{frontmatter}


\paperid{123} 


\title{On Approximate MMS Allocations \\on Restricted Graph Classes}


\author[A]{\fnms{Václav}~\snm{Blažej}\thanks{Supported by the European Research Council (ERC) under the European Union’s Horizon 2020 research and innovation programme grant agreement number 948057.}}
\author[B]{\fnms{Michał}~\snm{Dębski}}
\author[B]{\fnms{Zbigniew}~\snm{Lonc}\thanks{Corresponding Author. Email: zbigniew.lonc@pw.edu.pl}} 
\author[C]{\fnms{Marta}~\snm{Piecyk}}
\author[A,B]{\fnms{Paweł}~\snm{Rzążewski}\footnotemark[*]}

\address[A]{University of Warsaw}
\address[B]{Warsaw University of Technology}
\address[C]{CISPA Helmholtz Center for Information Security}


\begin{abstract}
We study the problem of fair division of a set of indivisible goods with connectivity constraints. Specifically, we assume that the goods are represented as vertices of a connected graph, and sets of goods allocated to the agents are connected subgraphs of this graph. We focus on the widely-studied maximin share criterion of fairness. It has been shown that an allocation satisfying this criterion may not exist even without connectivity constraints, i.e., if the graph of goods is complete. In view of this, it is natural to seek approximate allocations that guarantee each agent a connected bundle of goods with value at least a constant fraction of the maximin share value to the agent. It is known that for some classes of graphs, such as complete graphs, cycles, and $d$-claw-free graphs for any fixed $d$, such approximate allocations indeed exist. However, it is an open problem whether they exist for the class of all graphs.

In this paper, we continue the systematic study of the existence of approximate allocations on restricted graph classes. In particular, we show that such allocations exist for several well-studied classes, including block graphs, cacti, complete multipartite graphs, and split graphs.
\end{abstract}

\end{frontmatter}


\section{Introduction}\label{intro}

Fair division of goods is a fundamental problem in social choice theory, computer science, and economics. 
The objective is to allocate goods to agents in a way that satisfies a specified fairness criterion. 
The study of this problem dates back to the seminal work of Steinhaus~\cite{s:48}. 
Extensive research has been conducted on the fair division of divisible goods (Brams and Taylor~\cite{bt:96}, Procaccia~\cite{p:16}). 
In this case, the problem is known as the \emph{cake-cutting problem}.

Another variant of the problem, that we deal with in this paper, is fair division of a set of indivisible goods. 
In this case, the problem involves a set of items and a set of agents, each with her own utility function for the goods, which assigns some values to all subsets of goods, called bundles. 
It is commonly assumed, as we do in this paper, that utility functions are additive, i.e., the value of each bundle is equal to the sum of the values of the individual items in this bundle. 

It has been shown that for divisible goods, allocations satisfying natural criteria such as envy-freeness or proportionality always exist (Brams and Taylor~\cite{bt:96}). 
However, when the goods are indivisible, it is easy to give examples where no allocation satisfies these criteria. 
Therefore, there is a need to define fairness criteria that are more realistic and achievable when allocating indivisible goods.

In recent years, one of the most extensively studied fairness criteria for the allocation of indivisible goods has been the \emph{maximin share} criterion, introduced by Budish~\cite{b:11}. 
The maximin share of an agent is the maximum value the agent can ensure to obtain by dividing the set of goods into $n$ parts, where $n$ is the number of agents, and receiving a part with the minimum value. 
An allocation satisfies the maximin share criterion if every agent receives a set of goods of value at least equal to her maximin share. 
Such allocations are referred to as \emph{maximin allocations}, or simply \emph{mms-allocations}. 

For several years it has been an open problem whether mms-allocations always exist. This question was answered in the negative (Procaccia and Wang~\cite{pw:14}, Feige et al.~\cite{fst:22}), even for as few as three agents. In light of this, it is natural to look for \emph{approximate mms-allocations}, which guarantee that each agent receives a bundle of value at least some positive fraction of her maximin share. Procaccia and Wang~\cite{pw:14} proved the existence of an allocation ensuring every agent a bundle of value at least $\frac{2}{3}$ of her maximin share. 
This approximation ratio has since been improved in several subsequent papers (Amanatidis et al.~\cite{amns:17}, Barman and Krishnamurthy~\cite{bm:20}, Garg et al.~\cite{gmt:19}, Ghodsi et al.,~\cite{ghssy:18}, Garg and Taki~\cite{gt:19}, Akrami et al.~\cite{agst:23}). 
The best known result of this kind is given by the following recent theorem.

\begin{theorem}[Akrami and Garg~\cite{ag:24}]\label{thm:completegraph34}
For any set of goods and any set of agents, there always exists an allocation in which each agent gets at least $\frac{3}{4}+\frac{3}{3836}$ of her maximin share.
\end{theorem}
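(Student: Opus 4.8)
This is a deep result, the culmination of a long line of work; the plan is to follow the by-now standard reduction-plus-bag-filling template, sharpening the $\tfrac34$-guarantee of Ghodsi et al.~\cite{ghssy:18} and Garg and Taki~\cite{gt:19} by a structural dichotomy, as in~\cite{ag:24}. Since $\mms_i$ is invariant under positive rescaling of agent $i$'s valuation, normalize so that $\mms_i=1$ for every agent $i$, and write $\alpha:=\tfrac34+\tfrac3{3836}$; the goal is an allocation giving every agent a bundle of value at least $\alpha$. The first phase is a sequence of \emph{valid reductions}: whenever some unallocated agent $i$ values a single good $g$ at least $\alpha$, allocate $\{g\}$ to $i$; and, more generally, whenever a bundle of at most three goods can be pulled out and given to some agent who values it at least $\alpha$ without decreasing the $\mms$ of any surviving agent, do so. The standard argument --- deleting these goods from a surviving agent $j$'s own $\mms$-partition still leaves $n-1$ bundles of value at least $1$ --- shows each reduction preserves the invariant $\mms_j\ge 1$, so it suffices to handle the resulting \emph{irreducible} instance. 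There every good, and every small bundle, is worth strictly less than $\alpha$ to every agent, which forces strong restrictions on the agents' $\mms$-partitions: no singletons, and after the size-$2$ and size-$3$ reductions the only agents that are not trivially satisfiable are those whose $\mms$-partition consists of triples. One may also assume the instance is \emph{ordered}, i.e.\ all agents rank the goods $g_1,g_2,\dots$ in the same non-increasing order, since solving the ordered instance is without loss of generality for approximate $\mms$.

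On the irreducible ordered instance, run bag-filling with ``fixed'' initial bags: form $n$ bags $B_j:=\{g_j,\,g_{2n+1-j}\}$, pairing the $j$-th largest good with the $(2n+1-j)$-th, then repeatedly take an unclaimed bag, add still-unallocated goods smallest first until some unallocated agent values it at least $\alpha$, and hand it to that agent. The process never runs out, because leftover goods are cheap; and correctness reduces to a case analysis on the sizes of the $\mms$-bundles of the still-unserved agents --- a bundle of size $\le 2$ is dominated by one of the initial pairs $\{g_j,g_{2n+1-j}\}$ via the classical lemma that these $n$ pairs sort-dominate any partition into parts of size at most $2$, while if all $\mms$-bundles have size exactly $3$ then the $2n$ largest goods already carry nearly all the value and one extra cheap good closes the gap. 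This is the Garg--Taki argument and yields a $\bigl(\tfrac34+\tfrac1{12n}\bigr)$-approximation; it already exceeds $\alpha$ for all small $n$, so the real task is a bound that does not decay with $n$.

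To obtain a constant improvement for every $n$, run the algorithm above with threshold $\alpha$ and use a dichotomy. If it completes --- every agent served with a bundle of value at least $\alpha$ --- we are done. Otherwise it fails, and tracing the failure backwards forces the instance to be \emph{almost extremal}: the total value of the goods is close to $n$, $\mms_i$ is attained essentially uniquely, and almost every agent's $\mms$-partition consists of triples with value profile close to $(\tfrac12,\tfrac14,\tfrac14)$, with $g_1,\dots,g_n$ of value $\approx\tfrac12$ and $g_{n+1},\dots,g_{3n}$ of value $\approx\tfrac14$. In this rigid regime one abandons bag-filling and argues directly on the common near-extremal partition: set up a bipartite incidence between agents and the ``slots'' of that partition and, via an augmenting-path / matching argument (or a compact LP-duality argument), produce a re-bundling of the goods into $n$ triples in which every agent receives value at least $\tfrac34+\tfrac3{3836}$ --- the constant $\tfrac3{3836}$ being exactly the slack that the unavoidable deviation of a genuine instance from the perfectly rigid $(\tfrac12,\tfrac14,\tfrac14)$ configuration always makes available. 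Taking the better of the two cases proves the theorem.

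The crux is the extremal-structure step: turning ``the high-threshold run fails'' into a usable \emph{quantitative} description of the instance means propagating and simultaneously controlling many error terms, and then the re-bundling argument must work uniformly over all near-extremal instances and all $n$, yielding a genuine constant rather than an $n$-dependent gain. Secondary technical care is needed to check that the size-$2$ and size-$3$ reductions are genuinely valid and compose, and that passing to the ordered instance loses nothing in the approximate setting.
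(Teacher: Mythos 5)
First, a point of orientation: Theorem~\ref{thm:completegraph34} is not proved in this paper at all --- it is quoted from Akrami and Garg~\cite{ag:24} and used later as a black box in the split-graph argument. So there is no in-paper proof to compare your attempt against; the only meaningful benchmark is the external source. Measured against that, your sketch correctly reproduces the architecture of the known proof: normalization, valid reductions on ordered instances, the Garg--Taki bag initialization $B_j=\{g_j,g_{2n+1-j}\}$ followed by bag-filling, and a dichotomy in which a failed high-threshold run forces near-extremal structure.

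As a proof, however, it has a genuine gap, and the gap sits exactly where the theorem lives. The entire content of the statement is the uniform constant $\tfrac{3}{3836}$, and in your write-up that constant is asserted rather than derived: you say the re-bundling of the near-extremal instance ``always makes available'' slack $\tfrac{3}{3836}$, which is precisely the claim to be proved. Nothing in the sketch pins down why a genuine instance's deviation from the rigid $(\tfrac12,\tfrac14,\tfrac14)$ profile is bounded below by a constant independent of $n$, nor how the ``many error terms'' you mention are propagated without the gain degenerating to an $O(1/n)$ improvement of the kind Garg--Taki already give. The proposed matching/LP re-bundling step is likewise named but not constructed. A second, more local problem: your justification of the multi-good valid reductions --- ``deleting these goods from a surviving agent $j$'s own mms-partition still leaves $n-1$ bundles of value at least $1$'' --- is false as stated, because a bundle of two or three goods may intersect two or three distinct parts of agent $j$'s partition, destroying up to three parts while the number of agents drops by only one. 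The validity of the specific reductions $\{g_1\}$, $\{g_n,g_{n+1}\}$, $\{g_{2n-1},g_{2n},g_{2n+1}\}$ requires a rearrangement lemma particular to ordered instances; you flag this as ``secondary technical care,'' but it is a real lemma, not a routine check. In short: this is an accurate map of the Akrami--Garg proof, not a proof of the theorem.
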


In this paper, we study the problem of mms-allocations under some connectivity constraints imposed on the set of goods. We follow the framework proposed by Bouveret et al.~\cite{bceip:17}. 
In this framework, the items to be distributed among agents are represented as vertices of a connected graph, referred to as the \emph{graph of goods}. 
An allocation satisfies a fairness criterion if the values of the bundles received by agents are not only of sufficiently high value to them but also induce connected subgraphs in the graph of goods. 
This model captures a variety of applications. 
One of them is the problem of consolidating land plots (King and Burton~\cite{kb:82}). 
In this scenario, highly fragmented plots need to be reallocated to farmers in such a way that the new plots they receive are close one to another.
 Other examples include the allocation of rooms in a building to research groups (Bouveret
et al.,~\cite{bceip:17}) or the distribution of airline connections among carriers (Lonc~\cite{Lonc:23}). 
 The original problem of fair allocation of indivisible goods becomes a special case in this framework when the graph of goods is complete.
 
 The problem of the existence of mms-allocations and approximate mms-allocations has been investigated for several classes of graphs. 
Bouveret et al.~\cite{bceip:17} proved that mms-allocations always exist when the graph of goods is a tree. 
Moreover, they provided an example demonstrating that if the graph of goods is a cycle, then for certain utility functions of the agents, an mms-allocation may not exist. 

Lonc and Truszczynski~\cite{lt:20} thoroughly studied the case where the graph of goods is a cycle. 
They proved that in this case, for any set of agents, an allocation guaranteeing the agents bundles of value at least $\frac{\sqrt{5}-1}{2}\approx 0.62$ of their maximin shares always exists. 

Lonc~\cite{Lonc:23} considered the case where the graph of goods is $d$-claw-free, i.e., it does not contain a star with $d$ edges as an induced subgraph. 
He proved that for $d\geq 3$, if the graph of goods is $d$-claw-free,  then there exists an allocation ensuring every agent a bundle of value at least $\frac{1}{d-1}$ of her maximin share. 
It is worth noting that the class of $d$-claw-free graphs contains the class of graphs with a maximum degree bounded by $d-1$. 

In view of the results mentioned above, it is natural to ask the following question.

\begin{question}\label{quest1}
Is there a constant $\alpha>0$ such that for any connected graph of goods and any set of agents with arbitrary utility functions, there is an allocation that guarantees every agent a bundle of value at least $\alpha$ fraction of her maximin share?
\end{question}
To the best of our knowledge, this question remains wide open. In this paper, we provide a positive answer to Question~\ref{quest1} for certain restricted classes of graphs. 

\subsection{Our contribution}\label{contribution}
We continue the line of research focused on finding approximate allocations that guarantee agents a positive fraction of their maximin shares for several important classes of graphs of goods widely studied in graph theory. 

The first class we consider is the class of block graphs, which can be viewed as a common generalization of trees and complete graphs: here, complete graphs are arranged in a tree-like structure. 
Since a complete graph is a block graph, we know that we cannot always guarantee that each agent gets her maximin share. On the other hand, thanks to the tree-like structure, we may hope that the techniques developed for trees by Bouveret et al.~\cite{bceip:17} may be generalized to this setting. 
Actually, we consider the class of \emph{block-cactus graphs}, which encompasses both block graph and \emph{cacti} -- another well-studied class,  where cycles are arranged in a tree-like structure.

For this class of graphs, we show the following result.

\begin{theorem}\label{thm:blockcactusintro}
For every connected block-cactus graph and for any set of agents, there always exists an allocation in which each agent gets at least $\frac{1}{2}$ of her maximin share.
\end{theorem}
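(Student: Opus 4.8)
The plan is to prove the statement by induction on the pair $(n,|V(G)|)$ ordered lexicographically, where $n$ is the number of agents. After setting aside the agents with maximin share $0$ (who can be handed a single vertex at the very end) and rescaling the utilities of the rest, we may assume $\mms_a(G,n)=1$ for every agent $a$; the goal is then to partition $V(G)$ into connected bundles and assign them bijectively to the agents so that each agent receives value at least $\tfrac12$. For $n=1$ the single agent gets all of $V(G)$, which is connected. For $n\ge 2$, if $G$ is $2$-connected then it is a single clique or a single cycle and we finish immediately: for a clique Theorem~\ref{thm:completegraph34} yields a $\bigl(\tfrac34+\tfrac{3}{3836}\bigr)$-MMS allocation, which is $\tfrac12$-MMS and, the clique being complete, automatically connected; for a cycle the bound $\tfrac{\sqrt5-1}{2}>\tfrac12$ of Lonc and Truszczy\'nski~\cite{lt:20} applies. (A single edge or vertex is trivial.)

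Otherwise $G$ has a cut vertex, and we fix a \emph{leaf block} $B$, i.e.\ a block of $G$ containing exactly one cut vertex $v$; recall that $B$ is a clique or a cycle and that every vertex of $V(B)\setminus\{v\}$ is a non-cut vertex of $G$. For each agent $b$ fix a maximin partition; it splits $V(B)\setminus\{v\}$ into the trace $Q^b$ of the (unique) part that contains $v$, possibly empty, together with several \emph{local parts}, each contained in $V(B)\setminus\{v\}$ and hence worth at least $1$ to $b$. If no agent has a local part in $B$, then every agent's part containing $v$ absorbs all of $V(B)$, so contracting $B$ to the single vertex $v$ gives a smaller block-cactus graph $G'$ with $\mms_b(G',n)\ge\mms_b(G,n)=1$ for every agent (a maximin partition of $G$ keeping $B$ within one part descends to $G'$); the inductive hypothesis on $(n,G')$, followed by lifting the bundle that contains $v$ back over $B$, finishes the step. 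If some agent does have a local part in $B$, we instead peel one agent off: choose a connected $S\subseteq V(B)\setminus\{v\}$ (a contiguous arc, in the cycle case) of minimum cardinality subject to $u_c(S)\ge\tfrac12$ for some agent $c$, give $S$ to $c$, and apply the inductive hypothesis to $(n-1,\,G-S)$, which is again a connected block-cactus graph. Provided the peeling harms no one else, i.e.\ $\mms_b(G-S,n-1)\ge 1$ for every $b\ne c$, induction delivers each remaining agent value at least $\tfrac12\,\mms_b(G-S,n-1)\ge\tfrac12$, completing the proof.

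To see why one should expect $\mms_b(G-S,n-1)\ge 1$, use minimality of $S$. If $|S|=1$, say $S=\{x\}$, a direct argument works: for each remaining $b$, the parts of $b$'s maximin partition other than the one containing $x$ survive untouched, the residual of the part containing $x$ stays connected, and — since $x$ is a non-cut vertex, so $G-x$ is connected — that residual can be merged into an adjacent part, giving $n-1$ parts of value at least $1$. If $|S|\ge 2$, then no single vertex is worth $\tfrac12$ to anyone and deleting a vertex of $S$ (an endpoint, in the cycle case) yields a smaller connected set worth less than $\tfrac12$ to everyone, so $u_b(S)<1$ for every $b$. Now take $b$'s maximin partition: every part meeting $S$ contains a vertex of $B$, so after deleting $S$ its residual fragment stays connected, except that when $B$ is a cycle at most one part can be split into two arcs by $S$, and then the larger arc still carries at least half of that part's value. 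Because $b$ loses total value less than $1$, one merges the surviving fragments — using, when $B$ is a clique, that any two of them are adjacent (being disjoint nonempty subsets of a clique), and keeping track so the number of parts drops by exactly one — into $n-1$ connected parts of value at least $1$. It is precisely the repair of a split cyclic part, and the uneven way the lost value can be spread, that pins the ratio at $\tfrac12$ rather than $1$.

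The step I expect to be the real obstacle is making this last paragraph rigorous in every configuration. When $S$ meets three or more of $b$'s parts and almost all of $u_b(S)$ falls into a single one of them, a naive merge can leave some fragment worth less than $1$, so the argument must either show that such a configuration cannot arise for the chosen $S$ or retreat to a different leaf block, a different choice of $S$, or a contraction; and, separately, it must handle the apportionment of the $n-1$ surviving agents among the connected components of $G-S$ whenever more than one of them is non-trivial. This is a technically heavier version of the forest-rebalancing argument used for trees by Bouveret et al.~\cite{bceip:17}; the remaining ingredients (the normalisation, the closure of block-cactus graphs under the deletions and contractions used above, and the two single-block reductions) are routine.
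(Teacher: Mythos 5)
Your overall architecture (pick a terminal block $B$ with cut-vertex $v$; either contract $B$ into $v$ or peel a piece of $B\setminus\{v\}$ off to one agent) is the same as the paper's, and your contraction case is essentially the paper's Case~1. The genuine divergence, and the genuine gap, is the peeling step: the claim $\mms_b(G-S,n-1)\geq 1$ for every surviving agent $b$ is simply false for a single-agent peel, and the difficulty you flag at the end is not a repairable technicality of your lemma as stated. Concretely, let $B$ be a clique block and let agent $b$'s mms-partition have three parts $P_1,P_2,P_3$ contained in $B\setminus\{v\}$, each of value exactly $1$, with $u_b(P_1\cap S)=u_b(P_2\cap S)=0.01$ and $u_b(P_3\cap S)=0.97$ (spread over two vertices so that no single vertex reaches $\frac12$). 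Then $u_b(S)=0.99<1$, consistent with your minimality argument, but the residual fragments are worth $0.99$, $0.99$ and $0.03$: no regrouping of them (including absorbing them into untouched parts or into the part through $v$, which does not increase the part count) produces two connected parts of value at least $1$, so only $n-2$ parts of value $\geq 1$ survive and the induction hypothesis with $n-1$ agents is unavailable. Since $S$ is chosen by minimality with respect to agent $c$ while $b$'s partition is unconstrained, nothing rules this configuration out.

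The paper's proof is engineered precisely to avoid this. First, it does not merely rescale: using that block-cactus graphs are hereditary, it applies Lemma~\ref{obs:heavy} (with Lemma~\ref{prop_old}) to reduce to agents that are $\frac12$-mms-bounded, so no single vertex anywhere is worth $\frac12\mms_i$ to anyone. Second, and decisively, it orders $B\setminus\{v\}$ along a Hamiltonian path ending at $v$ and peels off not one agent but greedily as many agents $i_1,\dots,i_\ell$ as possible, each receiving a minimal next segment $L_q$ worth at least $\frac12\mms_{i_q}$ to her, stopping only when the leftover $R$ of $B\setminus\{v\}$ is worth less than $\frac12\mms_i$ to every survivor $i$. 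Boundedness gives $u_i(L_q)<\mms_i$, hence $u_i(B\setminus\{v\})<(\ell+1)\mms_i$, so at most $\ell+1$ bundles of $i$'s mms-partition meet $B\setminus\{v\}$, all but one of them contained in it; either at most $\ell$ bundles are destroyed (leaving $n-\ell$ intact), or exactly $\ell+1$ meet the removed set and then $R$ is shown to be worth strictly more than the destroyed portion of the unique bundle through $v$, so $R$ can be substituted for it. This caps the damage at one bundle per removed agent plus one repairable bundle, which is exactly what your single-peel version cannot guarantee. To complete your argument you would need to import both the boundedness reduction and this multi-agent greedy peel (or an equivalent device controlling how many of $b$'s parts the removed set can meet); as written, the key lemma of your inductive step does not hold.
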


It is worth noting that the class of block-cactus graphs contains graphs with arbitrarily large stars as induced subgraphs, so our results are not covered by  previously known results. 

Next, we consider the class of complete multipartite graphs. 
This class can be seen as another generalization of complete graphs, where each vertex is replaced by a set of independent vertices. 
Notably, this class contains the important class of complete bipartite graphs. In particular, these graphs contain large induced stars.

\begin{theorem}\label{thm:multipartiteintro}
For every complete multipartite graph with at least two parts, and any set of agents, there always exists an allocation in which each agent gets at least $\frac{1}{4}$ of her maximin share.
\end{theorem}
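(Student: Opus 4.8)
The plan is to combine a ``reduction'' phase that disposes of high‑value goods with a connectivity‑aware ``bag‑filling'' phase, exploiting the elementary fact that in a complete multipartite graph a vertex set induces a connected subgraph exactly when it is a single vertex or meets at least two parts. Throughout, rescale each agent's valuation so that $\mms_i=1$; an agent with $\mms_i=0$ can be handed an empty bundle. Several degenerate graphs are handled directly: if $G$ is complete (every part is a singleton) the connectivity constraint is vacuous and Theorem~\ref{thm:completegraph34} already gives each agent $\tfrac34>\tfrac14$ of her share; if $G$ is a tree — in particular a star or a single vertex — the result of Bouveret et al.~\cite{bceip:17} yields an exact mms‑allocation. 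So from now on assume $G$ is connected complete multipartite, not a star, with parts $C_1\supseteq\cdots\supseteq C_p$, $|C_1|\ge 2$ and $p\ge 2$.

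The first ingredient is a single‑good reduction lemma: if $g$ is any good of a non‑star complete multipartite graph $G$, then for every agent $j$ and every $k\ge 2$ one has $\mms^{k-1}_j(G-g)\ge\mms^{k}_j(G)$. I would prove this as in the unconstrained case: take $j$'s optimal $k$‑partition $P_1,\dots,P_k$, locate $g$ in some part $P_1$, and absorb $P_1\setminus g$ into a suitable other part, obtaining a $(k-1)$‑partition of $G-g$ whose parts are still worth at least $\mms^k_j(G)$ to $j$; one checks that a part into which $P_1\setminus g$ can be merged while staying connected always exists, the only obstruction forcing all of $V(G)\setminus\{g\}$ into one part, i.e.\ $G$ a star. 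Phase~1 then repeatedly does the following: while some still‑unsatisfied agent $i$ values some remaining good $g$ at least $\tfrac14$, allocate $\{g\}$ to $i$ and delete both; by the lemma every remaining agent keeps maximin share at least $1$ on the shrunken (still complete multipartite) instance, and if that instance ever becomes a star or a complete graph we finish via the corresponding base case. After Phase~1 every remaining good is worth less than $\tfrac14$ to every remaining agent, and since each remaining agent still has maximin share at least $1$, she values the whole remaining vertex set by at least $k'$, the current number of agents.

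Phase~2 is greedy bag‑filling. On the current (complete multipartite) instance I grow a connected bundle $B$ one vertex at a time and stop at the first moment some still‑unsatisfied agent values $B$ at least $\tfrac14$; I hand $B$ to such an agent and iterate on the rest. The key point is that $B$ is then worth less than $\tfrac12$ to every agent who was still unsatisfied when $B$ was closed, because it was worth less than $\tfrac14$ just before the last (and $<\tfrac14$‑valued) vertex was added. Hence after $\ell_0$ bundles have been distributed any still‑unsatisfied agent values the as‑yet‑unallocated vertices by more than $k'-\tfrac12\ell_0\ge k'-\tfrac12(k'-1)\ge 1>\tfrac14$, so the process never gets stuck and therefore satisfies all $k'$ agents, each receiving at least $\tfrac14=\tfrac14\mms_i$.

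The step that needs real care — and the one I expect to be the main obstacle — is keeping both the bundles and the unallocated set connected throughout Phase~2. Here I would split into two cases according to the size of the largest part. If $|C_1|\le|C_2|+\cdots+|C_p|+1$ then $G$ has a Hamiltonian path, and one grows every bundle as a consecutive segment of it, so all bundles and all suffixes are automatically connected. Otherwise $C_1$ is huge, and then the maximin‑share lower bound forces $k'\le|C_2|+\cdots+|C_p|$, so there are enough ``hub'' vertices outside $C_1$ to build each bundle as one hub together with a block of $C_1$‑vertices, reserving one hub and one $C_1$‑vertex so that the leftover stays connected; the never‑stuck count above survives with a couple of extra terms. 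Checking that these bundle‑construction rules always admit a value‑increasing, part‑preserving move — and bookkeeping the interaction with the reserved vertices — is where the bulk of the technical work lies.
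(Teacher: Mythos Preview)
Your Phase~1 is essentially the reduction the paper carries out via Lemma~\ref{obs:heavy}, and your Case~1 of Phase~2 (bag-filling along a Hamiltonian path when $|C_1|\le |C_2|+\cdots+|C_p|+1$) is a correct and pleasant alternative to the paper's construction in that regime.

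The genuine gap is Case~2. Your rule ``each bundle is one hub together with a block of $C_1$-vertices'' breaks when the agents' value is concentrated in $H=C_2\cup\cdots\cup C_p$ rather than in $C_1$. Concretely, take $G=K_{12,10}$ (so $|C_1|=12$, $H=C_2$ with $|H|=10$) and two identical agents who value every $C_2$-vertex at $\tfrac15$ and every $C_1$-vertex at $0$. Each agent has maximin share $1$ (split the hubs five--five and add six $C_1$-vertices to each bundle for connectivity), every good is worth $<\tfrac14$, and $12>11=|H|+1$, so you are in Case~2 after Phase~1. But any bundle of the form ``one hub plus $C_1$-vertices'' is worth exactly $\tfrac15<\tfrac14$ to both agents, so the bag never closes. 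The ``never-stuck'' inequality does tell you the remainder has value $>\tfrac14$, but that value sits entirely in hubs your construction rule forbids you to add. Allowing several hubs per bundle repairs this particular example, yet then the bound $k'\le|H|$ no longer guarantees you will not run out of hubs, and the ``couple of extra terms'' you allude to do not control how many hubs a single bundle consumes.

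The paper avoids this difficulty by a construction with no case split on the graph: it shows one can always partition $V=V_1\cup V_2$ so that every $V_1$--$V_2$ pair is adjacent and $|V_1|,|V_2|\ge n$, assigns each agent to the side she values more (hence worth at least $\tfrac{n}{2}\mms_i$), bag-fills within each side ignoring connectivity, and then attaches to every bag one still-unused vertex from the opposite side. The idea you are missing is precisely this split of the \emph{agents} by preferred side; it is what lets you spend ``hub'' vertices freely on the agents who actually value them.
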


The last class of graphs we study is the class of split graphs, i.e., graphs whose vertex sets can be partitioned into two subsets: one inducing a complete graph and the other being edgeless. 
Note that this class contains all complete graphs and graphs in this class may contain induced stars of arbitrarily large size. 
Our result for this class is somewhat weaker, as we show that for any set of agents of a fixed number of types (agents are of the same type if they have the
same utility function), there exists an allocation that ensures each agent a constant fraction of her maximin share.

\begin{theorem}\label{thm:splitintro}
For every connected split graph and every set of agents of at most $2^k$ types, there always exists an allocation in which each agent gets at least $\frac{3}{7\cdot 2^k - 3}$ of her maximin share.
\end{theorem}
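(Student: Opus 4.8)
The plan is to induct on $k$, at each step halving the number of type classes, with the combinatorial structure of split graphs doing most of the work. Write $V(G)=C\cup I$ with $C$ a clique and $I$ independent, and let $n$ be the number of agents; recall that every induced subgraph of a split graph is a split graph, and that a connected subgraph on at least two vertices must use a vertex of $C$ (as $G[I]$ is edgeless). The first thing I would prove is the key structural bound: after rescaling utilities so that every $\mms_i=1$, each agent values at least $n-|C|$ vertices of $I$ at value at least $1$. Indeed, in agent $i$'s optimal connected partition of $G$ into $n$ parts at most $|C|$ parts can meet $C$, so at least $n-|C|$ parts are single vertices of $I$, each of value $\ge 1$. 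Consequently, for any set of at most $n-|C|$ agents a Hall-type matching (each such agent has at least $n-|C|$ admissible ``heavy'' vertices of $I$) hands them pairwise distinct heavy vertices of $I$, giving each her full maximin share. This disposes of the agents that will receive a singleton from $I$, leaving a subproblem on the connected split graph induced by $C$ together with the unused vertices of $I$, now with at most $|C|$ remaining ``clique agents''.

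For this core subproblem I would argue by induction on $k$. In the inductive step I split the at most $2^k$ type classes into two groups of at most $2^{k-1}$ classes, obtaining agent groups $A_1,A_2$ of sizes $n_1,n_2$ with $n_1+n_2=n$, and then cut the instance into two connected split graphs $G[V_1]$ and $G[V_2]$ so that for every $i\in A_j$ one has $\mms_i(G[V_j])\ge\tfrac{\alpha_k}{\alpha_{k-1}}\cdot\mms_i(G)$ (shares on $G[V_j]$ computed for $n_j$ agents; note $\alpha_k/\alpha_{k-1}$ is slightly more than $\tfrac12$). Applying the induction hypothesis to each $(G[V_j],A_j)$ then gives each $i\in A_j$ a connected bundle worth at least $\alpha_{k-1}\cdot\tfrac{\alpha_k}{\alpha_{k-1}}\cdot\mms_i(G)=\alpha_k\,\mms_i(G)$. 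The base case $k=0$ is a single type, where (after the reduction above) the surviving instance behaves essentially like an allocation problem on a complete graph, so Theorem~\ref{thm:completegraph34} supplies the factor $\alpha_0=\tfrac34$. Solving the recurrence $1/\alpha_k=2/\alpha_{k-1}+1$ with $1/\alpha_0=\tfrac43$ yields $\alpha_k=\tfrac{3}{7\cdot 2^k-3}$, the claimed bound.

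I expect the cutting step to be the main obstacle: finding $V_1\sqcup V_2$ that simultaneously preserves, for every agent, roughly half of her maximin share on her own side. Concretely this amounts to choosing a split $C=C_1\sqcup C_2$ of the clique and assigning each vertex of $I$ to a side containing one of its clique-neighbours (so that both $G[V_j]$ stay connected), in such a way that each of the few utility functions in play keeps enough value — and enough \emph{spreadable} value, so that it can be re-partitioned into $n_1$, resp.\ $n_2$, connected bundles of comparable worth — on both sides. I would attack this by ordering $C$ and sliding a cut point while tracking the value each relevant utility function places on each side, and by invoking the structural bound above (few parts of an optimal partition touch $C$; the rest are heavy $I$-singletons) to limit how much value a single cut can destroy. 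The regime where $|C|$ is small compared with $n$ is an easy separate case, since then almost every agent is forced onto an $I$-singleton and the matching argument already does the work. Assembling the matching step, the type recursion, and this cut lemma completes the argument.
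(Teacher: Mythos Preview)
Your proposal has a real gap at exactly the point you flag: the cut lemma. A sliding cut along an ordering of $C$ is a one-parameter intermediate-value argument; it can balance one utility at a time, but here you need a single split $C=C_1\sqcup C_2$, together with the forced assignment of each $v\in I$ to a side containing one of its $C$-neighbours, to preserve simultaneously for up to $2^{k-1}$ utilities on each side not just raw value but \emph{partitionable} value into $n_j$ connected pieces. The adjacency constraint is the worst part: a heavy $I$-vertex adjacent only to $C_1$ cannot go to the $A_2$ side no matter what, and your structural bound (each agent has at least $n-|C|$ heavy singletons in $I$) says nothing about where those singletons' neighbours sit relative to any given cut. Two numerical symptoms that the scheme is not yet under control: with your recurrence $1/\alpha_k=2/\alpha_{k-1}+1$ one has $\alpha_k/\alpha_{k-1}=1/(2+\alpha_{k-1})<\tfrac12$, not ``slightly more than $\tfrac12$''; and for a single type the mms-partition itself is already an mms-allocation, so the honest base value is $\alpha_0=1$, which no longer matches the target constants and shows that the cut's actual loss factor has not been pinned down independently of the answer.

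The paper's proof takes a different route and never cuts the graph. It keeps all $n$ agents on all of $G$ and, starting from one mms-partition $S_i$ per type, runs $k$ rounds of an alternating ``claiming'' procedure on the vertices of $I$: at each round two families of packings are merged, and every $v\in I$ that still appears in two packings is removed from one of them, always taking the largest unassigned $I$-vertex in the current bundle and then immediately jumping to the bundle that just lost it. This guarantees each bundle keeps at least half its value minus one top vertex per round, giving the recurrence $\beta_{k,\ell+1}=\tfrac12\bigl(\beta_{k,\ell}-\tfrac{3}{7\cdot 2^k-3}\bigr)$ with $\beta_{k,0}=1$, hence $\beta_{k,k}=\tfrac{4}{7\cdot 2^k-3}$. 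After the $k$ rounds every $v\in I$ lies in exactly one $S_i$; each such $v$ is then collapsed onto a clique-neighbour in its bundle, yielding a complete-graph instance on $K=V\setminus I$ with modified utilities, to which Theorem~\ref{thm:completegraph34} is applied once at the end, supplying the factor $\tfrac34$ and producing $\tfrac{3}{7\cdot 2^k-3}$.
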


The classes of block-cactus graphs, complete multipartite graphs, and split graphs all contain the class of complete graphs.
Feige, Sapir, and Tauber~\cite{fst:22} constructed an example demonstrating that if the graph of goods is complete and has at least 9 vertices, then for certain utility functions of the agents, no allocation can guarantee the agents more than $\frac{39}{40}$ of their maximin shares. 
This result establishes an upper bound of $\frac{39}{40}$ on the approximation ratio for approximate mms-allocations within the classes of graphs considered in this paper.

All the classes of graphs considered in this paper are \emph{hereditary}, i.e., they are closed under vertex deletion. 
This property  allows us to use a  lemma of Lonc~\cite{Lonc:23}, which enables us to reduce the problem of existence of approximate allocations for arbitrary agents to the case where agents' utility functions are bounded by a constant fraction of their maximin share. 
To apply this lemma, we need to admit graphs of goods to be disconnected, which requires  extending the definitions of maximin share and mms-allocation. 
Nevertheless, disconnected graphs and the related concepts play an auxiliary role in this paper. 

\subsection{Related work}
The fair division of a graph into connected bundles received a significant attention in recent years (see the survey papers by Suksompong~\cite{s:21} and Biswas et al.~\cite{bpsv:23}). 
We have already mentioned the contributions of Bouveret et al.~\cite{bceip:17}, who initiated research in this area, the paper of Lonc and Truszczynski~\cite{lt:20}, who focused on the case where the graph of goods is a cycle, and Lonc~\cite{Lonc:23}, who examined the case where the graph of goods is $d$-claw-free.

Bei et al.~\cite{bils:22} introduced the concept of the {\it price of connectivity} for a graph $G$, defined as the worst-case ratio between the maximin share calculated with the constraint that all bundles must be connected in $G$ and the maximin share without this constraint. 
One of their results most relevant to the problems considered in this paper (Theorem 3.14 in Bei et al.~\cite{bils:22}) shows that for any graph of goods $G$, there exists an allocation guaranteeing each agent a bundle of value at least $\frac{1}{m-n+1}$ of her maximin share, where $n$ is the number of agents, $m$ is the number of vertices in $G$, and $m\geq n$. 
However, this guarantee becomes quite weak as the size of the graph increases. 
In contrast, the allocations demonstrated in this paper guarantee each agent a fraction of her maximin share that is independent of the size of the graph of goods. 

Bouveret et al.~\cite{bcl:19} and Xiao et al.~\cite{xqh:23} investigated the problem of connected fair allocation of indivisible chores (i.e., items that generate disutility for the agents). 
Greco and Scarcello~\cite{gs:24} examined maxileximin allocations under connectivity constraints, which aim to minimize agents' dissatisfaction while maximizing social welfare. 

Bilò et al.~\cite{Bilo:19} and Bei et al.~\cite{bils:22} established several results on relaxations of envy-free allocations with connected bundles. Suksompong~\cite{s:19} presented various approximation results for fairness criteria such as envy-freeness, proportionality, and equitability when the graph of goods is a path. 
Igarashi and Peters~\cite{ip:19} explored the allocation of connected bundles that are Pareto optimal.
Greco and Scarcello~\cite{gs:19} and Deligkas et al.~\cite{deli:21} focused on computational complexity issues related to fair allocation on graphs. 
Finally, Hummel and Igarashi~\cite{hi:24} addressed local fairness in graph-based division and focused on pairwise maximin share allocations.

\section{Preliminaries}
Let $N=[n]$ be a set of agents and let $V$ be a set of goods.
We assume that $V$ is a vertex set of an undirected graph that we call a {\it graph of goods}.
We do not assume that the graph $G$ is connected.
We say that a subset $X\subseteq V$ is a {\it bundle} if $X$ induces a connected subgraph of $G$.
In particular, the empty set is a bundle. 

For every agent $i\in N$ we define a {\it utility function} $u_i$ which assigns nonnegative reals to the goods in $V$. 
We extend the functions $u_i$ to sets of goods by assuming additivity.
So, for an arbitrary set $X\subseteq V$ we define $u_i(X):=\sum_{v\in X}u_i(v)$.
By convention, we use $u_i(\emptyset)=0$. We say that agents are of the {\it same type} if they have the same utility function. 

We call a tuple  $(P_1,\ldots,P_n)$ of bundles a {\it $(G,n)$-packing}, if the sets $P_1,\ldots,P_n$ are pairwise disjoint.
If, in addition, $\bigcup_{i=1}^nP_i=V$, then the $(G,n)$-packing is called a {\it $(G,n)$-partition}.
Let $\cP(G,n)$ (respectively, $\cS(G,n)$) denote the set of all $(G,n)$-partitions (respectively, $(G,n)$-packings) of the set $V$.

When we say that a $(G,n)$-packing (or a $(G,n)$-partition as a special case) $(P_1,\ldots,P_n)$ is an \emph{allocation},
we mean that for each $i \in [n]$, the bundle $P_i$ is assigned to the agent $i$.

Then, for an agent with utility function $u$ we define the {\it maximin share} as 
\[
\mms^{(n)}(G,u) = \max_{(P_1,\ldots,P_n)  \in \cP(G,n)} \min_{j\in [n]} u(P_j).
\]
Any $(G,n)$-partition of the set $V$ for which the maximum in the definition above is attained is called an {\it mms-partition} for the agent with utility function $u$. Note that even though mms-partition is defined as an ordered tuple, permuting the bundles arbitrarily also gives an mms-partition.

For technical reasons, it will also be convenient to admit disconnected graphs of goods.
For such graphs the maximin share may be undefined.
Indeed, consider a graph $G$ with more than $n$ components.
For such a graph no $(G,n)$-partition exists, so the maximin share is undefined.
To get around of this problem we replace in the definition of the maximin share the set $\cP(G,n)$ of all $(G,n)$-partitions of $V$ by the set $\cS(G,n)$ of all $(G,n)$-packings of $V$. So, by a {\it packing maximin share} for an agent with utility function $u$ we mean the value
\[
\pmms^{(n)}(G,u) = \max_{ (P_1,\ldots,P_n) \in \cS(G,n)} \min_{j \in [n]} u(P_j).
\]
This parameter is well-defined for every (not necessarily connected) graph.
We call a $(G,n)$-packing for which the maximum in the definition above is attained an {\it mms-packing} for the agent with utility function $u$. Clearly, $\pmms^{(n)}(G,u)\geq \mms^{(n)}(G,u)$ for every graph $G$ for which the maximin share is defined because ${\cal P}(G)\subseteq {\cal S}(G)$. Moreover, since the value of every bundle in a $(G,n)$-packing $(P_1,\ldots,P_n)$ for any agent $i$ is at least $\pmms^{(n)}(G,u_i)$, we have
\begin{align}
\begin{split}
u_i(V)\geq &\sum_{j=1}^nu_i(P_j) \\ \geq & \ n\cdot \pmms^{(n)}(G,u_i)\geq n\cdot \mms^{(n)}(G,u_i).
\label{eq4}
\end{split}
\end{align}
It is known (see Lemma~\ref{prop_old}) that $\pmms^{(n)}(G,u)=\mms^{(n)}(G,u)$ for every connected graph $G$. For disconnected graphs the values of the maximin share and the packing maximin share may differ. To see this, consider a graph $G$ with three vertices $x,y,z$, and a single edge connecting $x$ and $y$. Suppose there are two agents with the same  utility function $u$, where $u(x)=u(y)=2$ and $u(z)=1$. In this case, we have $\mms^{(2)}(G,u)=1$, while $\pmms^{(2)}(G,u)=2$. 

We say that an allocation $(A_1,\ldots,A_n) \in \cP(G,n)$ (respectively, $(A_1,\ldots,A_n) \in \cS(G,n)$) is an {\it mms-allocation} (respectively, a {\it pmms-allocation}) if 
\begin{align*}
u_i(A_i) &\geq \mms^{(n)}(G,u_i)\\
{\rm (respectively,}\ u_i(A_i) &\geq \pmms^{(n)}(G,u_i){\rm )}
\end{align*}
for every agent $i\in N$. We emphasize that in an mms-allocation, every element of $V$ is assigned to some agent, while this is not necessarily the case for pmms-allocations.

Similarly, for any $\alpha>0$, an allocation $(A_1,\ldots,A_n) \in \cP(G,n)$ (respectively, $(A_1,\ldots,A_n) \in \cS(G,n)$) is an {\it $\alpha$-mms-allocation} (respectively, a {\it $\alpha$-pmms-allocation}) if 
\begin{align*}
u_i(A_i)&\geq  \alpha\cdot \mms^{(n)}(G,u_i) \\
{\rm (respectively,}\ u_i(A_i)&\geq   \alpha\cdot \pmms^{(n)}(G,u_i){\rm )}
\end{align*}
for every agent $i\in [n]$. 
Note that 1-mms-allocations (respectively, 1-pmms-allocations) are exactly mms-allocations (respectively, pmms-allocations).

An agent $i\in N$ with utility function $u_i$ is {\it $\alpha$-mms-bounded} for the graph of goods $G$, if 
\[
u_i(v)<\alpha\cdot \mms^{(n)}(G,u_i)
\]
for every good $v$ in $G$. Note that in this case we must have $\mms^{(n)}(G,u_i) > 0$.

If the graph of goods $G$ and the number $n$ of agents are clear from the context, then we will write $\mms_i$ instead of $\mms^{(n)}(G,u_i)$ and $\pmms_i$ instead of $\pmms^{(n)}(G,u_i)$ for each agent $i\in [n]$. 

In this paper we use a standard graph-theoretic terminology and notation.
In particular, by $V(G)$ we denote the set of vertices of a graph $G$.
For a set $X \subseteq V(G)$, by $G-X$ we denote the graph obtained from $G$ by removing vertices of $X$ and all their incident edges.
We say that a subgraph $H$ of a graph $G$ is {\it induced by a set of vertices $X\subseteq V(G)$} if $H = G - (V(G) \setminus X)$.
A class of graphs $\cG$ is {\it hereditary} if every induced subgraph of a graph in $\cG$ is in $\cG$ too. 
In other words, $\cG$ is closed under vertex deletion.

A vertex $v$ in a connected graph $G$ is called a {\it cut-vertex} if the graph obtained from $G$ by removing $v$ is no longer connected.
A connected graph is {\it biconnected} if it does not have any cut-vertex.
A maximal biconnected subgraph of a graph $G$ is called a {\it block} of $G$.
It is well-known (see Bondy and Murty~\cite{bm:08}) that any two distinct blocks in a graph $G$ share at most one common vertex, and if such a vertex exists, then it is a cut-vertex of $G$.
For a connected graph $G$, we define a bipartite graph $B(G)$,
where one bipartition class consists of the blocks of $G$, and the other consists of the cut-vertices of $G$.
A block $B$ in $G$ and a cut-vertex $c$ in $G$ are adjacent in $B(G)$ if $c$ belongs to the vertex set of $B$.
It is easy to verify that the graph $B(G)$ is a tree.
Every block that is a leaf in this tree is called a {\it terminal block}. 

A graph in which every block is a complete graph is called a {\it block graph}, and a graph in which every block is a cycle or an edge is called a {\it cactus}. Note that both these classes contain all trees.

A graph $G$ is {\it complete multipartite} if its vertex set can be partitioned into 
subsets called \emph{parts} such that two vertices are adjacent in $G$ if and only if they belong to distinct parts. 
Note that each part is {\it independent}, i.e., there is no edge in $G$ contained in it. 
We observe that a complete multipartite graph is connected unless it has just one part and more than one vertex. 
Furthermore, the class of complete multipartite graphs is, obviously, hereditary.

A graph $G=(V,E)$ is called a \emph{split graph} if there exists a partition of its vertices into two sets $K,I$, such that $I$ is an independent set and $K$ induces a complete graph.
Again, the class of split graphs is hereditary.

We conclude this section with two simple observations that will be useful later.
The following lemma appearing in Lonc~\cite{Lonc:23} explains the relationship between mms-allocations and pmms-allocations when the graph of goods is connected.
\begin{lemma}\label{prop_old}
Let $G$ be a connected graph of goods.

\noindent
(i) For any utility function $u$ defined on the set of vertices of $G$ and a positive integer $n$, $\pmms^{(n)}(G,u) = \mms^{(n)}(G,u)$.\\
(ii) For any $\alpha > 0$ and any collection of $n\geq 1$ agents, an $\alpha$-mms-allocation exists if and only if an $\alpha$-pmms-allocation exists. 
\end{lemma}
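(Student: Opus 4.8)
The plan is to reduce both parts to a single combinatorial fact about connected graphs, which I will call the \emph{Key Claim}: if $G$ is connected, then every $(G,n)$-packing $(P_1,\dots,P_n)$ can be transformed into a $(G,n)$-partition $(P_1',\dots,P_n')$ with $P_i\subseteq P_i'$ for every $i\in[n]$. To prove the claim I would induct on the number $|R|$ of unassigned goods, where $R=V\setminus\bigcup_i P_i$. If $R=\emptyset$, the packing is already a partition. Otherwise let $W=\bigcup_i P_i$. If $W=\emptyset$, all bundles are empty, and any $(G,n)$-partition of the connected graph $G$ (obtained, say, by deleting edges of a spanning tree, padding with empty bundles if $|V(G)|<n$) trivially refines the packing. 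If $W\neq\emptyset$, then since $G$ is connected and $R,W$ are both nonempty, some edge $vw$ satisfies $v\in R$ and $w\in W$, say $w\in P_j$; then $P_j\cup\{v\}$ is connected, so moving $v$ into $P_j$ yields a $(G,n)$-packing with $|R|$ decreased by one in which every bundle still contains its old version, and the induction hypothesis applies. The use of the claim rests on the observation that, since each $u_i$ is additive and nonnegative, $P_i\subseteq P_i'$ implies $u_i(P_i)\le u_i(P_i')$.

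For part~(i): the inequality $\pmms^{(n)}(G,u)\ge\mms^{(n)}(G,u)$ holds for every graph because $\cP(G,n)\subseteq\cS(G,n)$. For the converse, take an mms-packing $(P_1,\dots,P_n)$, so $\min_j u(P_j)=\pmms^{(n)}(G,u)$, and apply the Key Claim to get a $(G,n)$-partition $(P_1',\dots,P_n')$ with $u(P_j')\ge u(P_j)$ for all $j$; then $\min_j u(P_j')\ge\pmms^{(n)}(G,u)$, so $\mms^{(n)}(G,u)\ge\pmms^{(n)}(G,u)$, and equality follows.

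For part~(ii): if $(A_1,\dots,A_n)\in\cP(G,n)$ is an $\alpha$-mms-allocation, then it is also a $(G,n)$-packing, and by~(i) $u_i(A_i)\ge\alpha\cdot\mms^{(n)}(G,u_i)=\alpha\cdot\pmms^{(n)}(G,u_i)$ for every $i$, so it is an $\alpha$-pmms-allocation. Conversely, given an $\alpha$-pmms-allocation $(A_1,\dots,A_n)\in\cS(G,n)$, apply the Key Claim to obtain a $(G,n)$-partition $(A_1',\dots,A_n')$ with $A_i\subseteq A_i'$; then $u_i(A_i')\ge u_i(A_i)\ge\alpha\cdot\pmms^{(n)}(G,u_i)=\alpha\cdot\mms^{(n)}(G,u_i)$ by~(i), so $(A_1',\dots,A_n')$ is an $\alpha$-mms-allocation.

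I do not expect a serious obstacle here. The only delicate points are that connectivity is used exactly once, to produce the edge between $R$ and $W$, and that the degenerate situations — empty bundles, or $|V(G)|$ smaller than $n$ — are smoothly handled by working through packings rather than attempting to build partitions directly. The disconnected instance given just before the statement (an edge $x\!-\!y$ together with an isolated $z$) shows that the Key Claim, and hence the lemma, genuinely fails without the connectivity hypothesis, so the argument cannot be made to avoid it.
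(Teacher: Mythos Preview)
Your argument is correct. The Key Claim is exactly the right combinatorial lemma, and your inductive proof of it is clean; the only cosmetic point is that in the $W=\emptyset$ case you can simply take $P_1'=V(G)$ and $P_i'=\emptyset$ for $i\ge 2$ rather than invoking spanning trees. Both parts~(i) and~(ii) then follow from the claim precisely as you describe.

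As for comparison with the paper: the paper does not actually supply a proof of this lemma. It is stated with a citation to Lonc~\cite{Lonc:23} and used as a black box. Your proof is the natural one and almost certainly coincides with what appears in that reference: the only way to pass from a packing to a partition on a connected graph without losing value is to greedily absorb leftover vertices into adjacent bundles, which is exactly your induction step.
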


Here is another statement that will be useful in this paper. Its weaker version appeared in Lonc~\cite{Lonc:23} as Lemma 2.
\begin{lemma}\label{obs:heavy}
Let $\cG$ be a hereditary class of graphs, $G\in \cG$, and $\alpha>0$.
If there exists an $\alpha$-mms-allocation for any connected induced subgraph of $G$ and any set of $\alpha$-mms-bounded agents (respectively, any set of $\alpha$-mms-bounded agents of at most $t$ types),
then there exists an $\alpha$-pmms-allocation for $G$ and an arbitrary set of agents (respectively, an arbitrary set of agents of at most $t$ types).
\end{lemma}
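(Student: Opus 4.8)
The plan is to prove, by strong induction on $|V(G')|+n$, the following statement: for every induced subgraph $G'$ of $G$ and every set $N$ of $n$ agents (of at most $t$ types, in the second variant), there is an $\alpha$-pmms-allocation of $G'$ among $N$. This suffices, because taking $G'=G$ gives the claim, and because $\cG$ is hereditary: every $G'$ occurring in the induction lies in $\cG$, and every connected induced subgraph of such a $G'$ is a connected induced subgraph of $G$, so the assumed $\alpha$-mms-allocations for $\alpha$-mms-bounded agents are available at every stage.

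The base cases are $n=0$ (the empty allocation) and $V(G')=\emptyset$ (then $\pmms^{(n)}(G',u_i)=0$ for all $i$, so assigning $\emptyset$ to everyone works). Otherwise the first reduction eliminates ``heavy'' goods: if some agent $i$ has a good $v$ with $u_i(v)\ge\alpha\cdot\pmms^{(n)}(G',u_i)$, I would assign the singleton bundle $\{v\}$ to $i$ and apply the induction hypothesis to $G'-v$ with $N\setminus\{i\}$. The key point is that from an optimal $(G',n)$-packing for any other agent $j$ one may delete the at most one part containing $v$ to obtain a $(G'-v,n-1)$-packing, so $\pmms^{(n-1)}(G'-v,u_j)\ge\pmms^{(n)}(G',u_j)$, and hence the recursively produced bundles still clear the thresholds required for an $\alpha$-pmms-allocation of $G'$. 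After this reduction is no longer applicable (and $V(G')\neq\emptyset$) we are in the situation where $u_i(v)<\alpha\cdot\pmms^{(n)}(G',u_i)$ for every agent $i$ and every good $v$; in particular $\pmms^{(n)}(G',u_i)>0$ for all $i$.

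If $G'$ is connected, then $\pmms^{(n)}(G',u_i)=\mms^{(n)}(G',u_i)$ by Lemma~\ref{prop_old}(i), so the agents are precisely $\alpha$-mms-bounded for $G'$, and the assumption directly yields an $\alpha$-mms-allocation of $G'$, which is the desired $\alpha$-pmms-allocation. The hard part is the disconnected case: the agents may ``prefer'' the $n$ available parts distributed among the components in mutually incompatible ways, and one must find a single split of $N$ that simultaneously respects all of their $\pmms$-thresholds. I would fix one component $C$, write $G'':=G'-V(C)$, and for each agent $i$ let $k_i$ (respectively $l_i$) be the largest integer $k$ with $\pmms^{(k)}(C,u_i)\ge\pmms^{(n)}(G',u_i)$ (respectively $\pmms^{(k)}(G'',u_i)\ge\pmms^{(n)}(G',u_i)$). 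Splitting an optimal $(G',n)$-packing for $u_i$ according to which component each part lies in shows $k_i+l_i\ge n$, and $k_i$ is finite since $\pmms^{(n)}(G',u_i)>0$. Now order the agents so that $k_{(1)}\ge k_{(2)}\ge\cdots$, let $m$ be the number of indices $j$ with $k_{(j)}\ge j$ (which form an initial segment, by monotonicity), put the first $m$ agents into $N_1$ and the remaining ones into $N_2$. A short computation gives $k_i\ge|N_1|$ for every $i\in N_1$ and, via $k_i+l_i\ge n$, $l_i\ge|N_2|$ for every $i\in N_2$; hence $\pmms^{(|N_1|)}(C,u_i)\ge\pmms^{(n)}(G',u_i)$ on $N_1$ and $\pmms^{(|N_2|)}(G'',u_i)\ge\pmms^{(n)}(G',u_i)$ on $N_2$. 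Since $u_i(v)<\alpha\cdot\pmms^{(n)}(G',u_i)\le\alpha\cdot\pmms^{(|N_1|)}(C,u_i)=\alpha\cdot\mms^{(|N_1|)}(C,u_i)$, the agents of $N_1$ are $\alpha$-mms-bounded for the connected graph $C$, so the assumption gives an $\alpha$-mms-allocation of $C$ among $N_1$; and since $|V(G'')|<|V(G')|$, the induction hypothesis gives an $\alpha$-pmms-allocation of $G''$ among $N_2$. On both sides every threshold is met, so the union of the two is an $\alpha$-pmms-allocation of $G'$ among $N$.

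The main obstacle is exactly this disconnected case, and concretely the existence of a good split $N=N_1\cup N_2$; the inequality $k_i+l_i\ge n$ together with the choice of $m$ as the length of the initial segment of $(k_{(j)})$ on which $k_{(j)}\ge j$ is what makes such a split possible. Finally, the bounded-types variant requires no extra work: removing a single agent, and splitting $N$ into $N_1$ and $N_2$, can only decrease the number of distinct utility functions, so the very same induction applies.
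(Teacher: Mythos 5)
Your proof is correct and follows essentially the same strategy as the paper's: peel off heavy goods by assigning each to a distinct agent, then split the remaining agents among the components via the same ``sort by capacity and take the longest prefix with $k_{(j)}\ge j$'' rule, justified by the additivity $k_i+l_i\ge n$ (the paper's $\sum_j f(i,j)\ge n-\ell$). The only difference is packaging: you recurse on one heavy good and one component at a time, whereas the paper removes a maximal heavy set in one batch and then runs an explicit greedy loop over all components, with a global counting argument for termination.
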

\begin{proof} 
In the proof we will apply the following statement appearing in Lonc~\cite{Lonc:23} as Proposition 1 (iii):

\smallskip
\noindent
{\it Let $G$ be a connected graph of goods. For any utility function $u$ defined on $V(G)$ and positive integers $m,n$ such that $m\leq n$, 
\begin{equation}\label{eq_lonc}
\pmms^{(m)}(G,u) \geq \pmms^{(n)}(G,u).
\end{equation}}

Consider an arbitrary graph of goods $G\in{\cal G}$ and an arbitrary set $N=\{ 1,\ldots,n\}$ of agents. Let, for each agent $i\in N$, $u_i$ be the utility function of $i$ and let ${\cal P}_i$ be an mms-packing for this agent. To simplify notation we define $\pmms_i:=\pmms^{(n)}(G,u_{i})$ for every agent $i\in N$. We need to construct an allocation assigning each agent $i$ a bundle of value at least $\alpha\cdot \pmms_i$. 

We define  $X:=\{  x_1,\ldots,x_\ell\}$ to be a maximal with respect to inclusion (and possibly empty) set of pairwise different vertices of $G$ such that every vertex $x_j$ is of value at least $\alpha\cdot \pmms_{i_j}$ to some agent $i_j$ and the  agents $i_1,\ldots,i_\ell$ are pairwise different. Every agent $i_j$ receives the vertex $x_j$ in our allocation.

It remains to show that there is an allocation assigning to each agent $i\in N'=N-\{ i_1,\ldots,i_\ell\}$ a bundle inducing a connected subgraph of $G-X$ which is of value at least $\alpha\cdot \pmms_i$ to this agent.
By the maximality of $X$, for every vertex $v\in V(G-X)$ and every agent $i\in N'$, 
\begin{equation}\label{ineq2}
u_i(v)<\alpha\cdot \pmms_i.
\end{equation}
Since $|X|=\ell$, for every agent $i\in N'$ there are at least $n-\ell$ bundles in ${\cal P}_i$ which do not intersect $X$. 

Let $G_1,\ldots,G_s$ be the components of $G-X$. 
Let ${\cal P}_i^j$ be the set of bundles in ${\cal P}_i$ that are contained in $V(G_j)$ and let $f(i,j)$ be the number of such bundles. Clearly, ${\cal P}_i^j$ is a $(G_j,f(i,j))$-packing. Moreover, $\sum_{j=1}^sf(i,j)\geq n-\ell$ 
for each agent $i\in N'$. We observe that, by Lemma 5 (i) and the fact that the value of each of $f(i,j)$ bundles in ${\cal P}_i^j$ is at least $\pmms_i$ for agent $i$, we have 
\begin{equation}\label{ineq1}
\mms^{(f(i,j))}(G_j,u_i)=\pmms^{(f(i,j))}(G_j,u_i)\geq \pmms_i
\end{equation}
for every agent $i\in N'$ and every $j\in\{ 1,\ldots,s\}$.

\medskip
\noindent
{\bf Claim.} Let $I_j\subseteq N'$ be a set of agents such that $f(i,j)\geq |I_j|$ for each agent $i\in I_j$.  Then, there is an $\alpha$-mms allocation for the graph of goods $G_j$ and the set of agents $I_j$. 

\begin{algorithm}[!ht]
\begin{tabbing}
1\quad\=\quad\=$T:=N'$;\\
2\>\>{\bf for} $j=1,2,\ldots,s$ {\bf do}\\
3\>\>\quad\=sort the agents $i\in T$ non increasingly according to \\
\>\>\>\quad\=the key $f(i,j)$: $i^j_1,i^j_2,\ldots,i_{|T|}^j$;\\
4\>\>\>$k_j:=$ the largest $p$ such that $f(i_p^j,j)\geq p$;\\
5\>\>\>$I_j:=\{ i^j_1,i^j_2,\ldots,i^j_{k_j}\}$;\\
6\>\>\>agents in $I_j$ distribute the vertices of  $V(G_j)$ among \\
\>\>\>\>themselves according to the allocation $A(I_j,G_j)$;\\
7\>\>\>$T:=T-I_j$;\\
8\>\>\>{\bf if} $T=\emptyset$ {\bf then}\\
9\>\>\>\>{\bf return}
\end{tabbing}
\caption{${\it allocate}(N',G-X,c)$}\label{alg1}
\end{algorithm}
\noindent
{\it Proof of the Claim.} We observe that the agents of $I_j$ are $\alpha$-mms bounded with respect to the graph of goods $G_j$ and  the utility functions restricted to $V(G_j)$. Indeed, applying in turn the inequalities (\ref{ineq2}), (\ref{ineq1}) and (\ref{eq_lonc}), for every vertex $v\in V(G_j)$ we get
\begin{align*}
u_i(v) &< \alpha\cdot \pmms_i\leq \alpha\cdot \mms^{(f(i,j))}(G_j,u_i)\\
&\leq \alpha\cdot \mms^{(|I_j|)}(G_j,u_i).
\end{align*}
Moreover, if the agents in $N$ are of at most $t$ types, then the agents in $I_j$ are also of at most $t$ types, as $I_j\subseteq N$. 
Since $G_j$ is a connected induced subgraph of $G$, by our theorem assumption, there is an $\alpha$-mms allocation for the graph of goods $G_j$ and the set of agents $I_j$. This completes the proof of the Claim.\hfill

\medskip
We denote by $A(I_j,G_j)$ the allocation whose existence is guaranteed by the Claim. 


We shall prove that the algorithm {\it allocate} shown in the figure Algorithm~\ref{alg1} produces the required $\alpha$-pmms allocation of the vertices of $G-X$ to the agents of  $N'$. In the $j$th pass of the loop of this algorithm bundles of vertices of the component $G_j$ of $G-X$ are distributed to agents. First, we sort (line 3) the agents $i$ which have not got their bundles yet  non increasingly according to the number $f(i,j)$ of bundles in their mms-packings ${\cal P}_i$ which are contained in $G_j$. We denote the $\ell$th agent in this sorting by $i^j_\ell$. Next, we define the set $I_j$ which consists of $k_j$ initial agents in the sorting  (line 5), where $k_j$ is the largest $p$ such that $f(i^j_p,j)\geq p$ (line 4) . We observe that, by the definition of $k_j$, $f(i,j)\geq k_j=|I_j|$ for every agent $i\in I_j$. Thus, it follows from the Claim that there is an allocation $A(I_j,G_j)$ assigning to each agent $i\in I_j$ a bundle of value at least 
\begin{align*}
\alpha\cdot \mms^{(k_j)}(G_j,u_i)&=\alpha\cdot \pmms^{(k_j)}(G_j,u_i)
\\ &\geq \alpha\cdot \pmms^{(f(i,j))}(G_j,u_i)\geq \alpha\cdot \pmms_i.
\end{align*}
We applied here, in turn, Lemma 5 (i), and the inequalities (\ref{eq_lonc}) and (\ref{ineq1}). We distribute the vertices of $G_j$ to agents of $I_j$ according to the allocation $A(I_j,G_j)$ (line 6) and the agents of $I_j$ quit the game (line 7). 

It remains to show that when the algorithm stops, the set $T$ is empty, i.e., all agents received their bundles. 
Suppose otherwise. Let $i$ be an agent who is still in the set $T$ when the algorithm stops. 
Let $T_j$ be the set $T$ at start of the $j$th pass of the loop of our algorithm. By the definition of $k_j$, for all agents $t\in T_j$, which are not included in the set $I_j$ (so do not receive a bundle) in the $j$th pass of the loop, we have $f(t,j)<k_j+1$, i.e., $f(t,j)\leq k_j$. 
Since the agent $i$ was not allocated a bundle in any pass of the loop, we have $f(i,j)\leq k_j$ for all $j$'s.  
Clearly, $\sum_{j=1}^sk_j$ is the total number of agents who received their bundles when the algorithm stops. Hence, 
\[
n-\ell\leq\sum_{j=1}^sf(i,j)\leq\sum_{j=1}^sk_j<n-\ell, 
\]
a contradiction. Thus, all agents received their bundles when the algorithm stops. 
\end{proof}

\setcounter{theorem}{1}

\section{Block graphs and cactus graphs}
We say that $G$ is a \emph{block-cactus graph} if its every block is either a cycle or a complete graph.
Clearly this class is a common generalization of cacti and block graphs.
Furthermore, it is easy to verify that block-cactus graphs form a hereditary class.
We will prove the following theorem which is a reformulation of the result mentioned in Subsection~\ref{contribution} using the introduced terminology.

\begin{theorem}\label{thm:block-cacti}
For every connected block-cactus graph and for any set of agents there exists a $\frac{1}{2}$-mms-allocation. 
\end{theorem}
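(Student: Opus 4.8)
The plan is to first reduce to the case of $\frac12$-mms-bounded agents and then prove the reduced statement by induction on the number of vertices, peeling off a terminal block. Since the class of block-cactus graphs is hereditary, Lemma~\ref{obs:heavy} and Lemma~\ref{prop_old}(ii) show that it suffices to prove the statement $(\star)$: every connected block-cactus graph $G$ admits a $\frac12$-mms-allocation whenever all agents are $\frac12$-mms-bounded. I would prove $(\star)$ by induction on $|V(G)|$. Note that once $(\star)$ is known for all connected block-cactus graphs with fewer than $|V(G)|$ vertices, Lemma~\ref{obs:heavy} and Lemma~\ref{prop_old}(ii) also give, for every such smaller graph, a genuine $\frac12$-mms-allocation for \emph{arbitrary} (not necessarily bounded) agents; the inductive step will invoke this stronger conclusion on the smaller graphs it produces.

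For the inductive step: if $G$ consists of a single block then $G$ is a complete graph or a cycle, and we are done by Theorem~\ref{thm:completegraph34} (every subset of a complete graph is connected) or by the result of Lonc and Truszczynski~\cite{lt:20} for cycles, since $\frac34>\frac12$ and $\frac{\sqrt5-1}{2}>\frac12$; the case $n=1$ is trivial. Otherwise fix a terminal block $B$ with cut-vertex $c$, set $P:=V(B)\setminus\{c\}$ and $R:=G-P$ (a connected block-cactus graph with fewer vertices). Two elementary tools drive the argument. First, for any agent $i$ with mms-partition $\mathcal Q=(Q_1,\dots,Q_n)$ and $Q^\ast\in\mathcal Q$ the part containing $c$, every part of $\mathcal Q$ other than $Q^\ast$ that meets $V(B)$ is contained in $P$, and $Q\cap V(R)$ is connected for every $Q\in\mathcal Q$ (any path inside a part can be rerouted through the cut-vertex $c$ so as to avoid $P$). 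Second, a greedy claim: if $i$ is $\frac12$-mms-bounded and $u_i(P)\ge\frac12\mms_i$, then there is a bundle $S\subseteq P$ with $\frac12\mms_i\le u_i(S)<\mms_i$ and $G-S$ connected — add the vertices of $P$ one at a time, in an arbitrary order if $B$ is complete and starting from an endpoint of the path obtained from $B$ by deleting $c$ if $B$ is a cycle, and stop the first time the running value reaches $\frac12\mms_i$; by $\frac12$-mms-boundedness it is then still below $\mms_i$.

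I would then distinguish cases according to whether some agent has an mms-part entirely inside $P$. If no agent does, then $V(B)\subseteq Q^\ast$ for every agent. If in addition some agent $i$ has $u_i(P)\ge\frac12\mms_i$, give $i$ the bundle $S$ of the greedy claim; for each $j\ne i$ we have $S\subseteq V(B)\subseteq Q^\ast_j$, so the $n-1$ parts of $\mathcal Q^{(j)}$ other than $Q^\ast_j$ are untouched and form a $(G-S,n-1)$-packing, whence $\mms^{(n-1)}(G-S,u_j)\ge\mms_j$; the agents other than $i$ are thus still $\frac12$-mms-bounded on $G-S$ and we recurse by induction. If instead $u_i(P)<\frac12\mms_i$ for every agent $i$, I would contract $P$ into $c$: work on $R$ with $u'_i(c):=u_i(c)+u_i(P)$ and $u'_i$ unchanged elsewhere; using $V(B)\subseteq Q^\ast_i$ one checks $\mms^{(n)}(R,u'_i)\ge\mms^{(n)}(G,u_i)$, so the strengthened hypothesis gives a $\frac12$-mms-allocation of $(R,u')$, and re-attaching $P$ to the bundle containing $c$ yields the required allocation of $G$.

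The remaining case — some agent has an mms-part entirely inside $P$ — is the heart of the argument and the step I expect to be the main obstacle. Here I would run a moving-knife procedure along $P$: repeatedly, among the agents not yet served, each agent marks the shortest prefix of the (shrinking) part of $P$ worth at least half her maximin share, the agent with the shortest mark takes it (its value lies in $[\frac12\mms,\mms)$ by boundedness) and leaves with her prefix; the procedure stops once no remaining agent values what is left of $P$ at $\frac12$ of her maximin share, and then that leftover is contracted into $c$ and we recurse on the resulting smaller graph. The delicate point is to show that no remaining agent $j$ loses maximin value: one bounds the number of $j$'s mms-parts swallowed by the removed prefix (each worth $\ge\mms_j$, while the removed prefix is worth less than (number of served agents)$\cdot\mms_j$ to $j$), observes that the contracted leftover contains no mms-part of $j$ (it is worth $<\frac12\mms_j$ to $j$), and carefully accounts for the parts of $j$ straddling the removed prefix together with the part $Q^\ast_j$ containing $c$, showing that the absorbed leftover compensates for what is lost. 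For cycle blocks there is an extra subtlety: $Q^\ast_j$ meets the block in an arc through $c$ that may reach into $P$ from the end where the knife starts, and one must argue — by the choice of knife direction, or through the contraction — that this does not disconnect $j$'s surviving parts. Making this accounting go through cleanly is, I expect, the technical crux of the proof.
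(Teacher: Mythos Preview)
Your proposal is correct and follows essentially the same strategy as the paper: reduce to $\tfrac12$-bounded agents via Lemma~\ref{obs:heavy}, induct on $|V(G)|$ with biconnected graphs as base case, peel off a terminal block, contract the block into the cut-vertex when it is ``light'', and run a moving knife along a Hamiltonian path of the block otherwise, with the key accounting step being the bound $u_j(L)<\ell\cdot\mms_j$ combined with $u_j(R)<\tfrac12\mms_j$ to control how many of $j$'s mms-bundles are lost.

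There are a few cosmetic differences worth noting. The paper uses a cleaner two-way split: Case~1 is $u_i(B')<\mms_i$ for all $i$ (contract), Case~2 is $u_{i'}(B')\ge\mms_{i'}$ for some $i'$ (moving knife). Your three-way split, keyed on whether some mms-part lies entirely in $P$, works too, but your sub-case~1(a) is just the first step of the moving knife and could be absorbed into Case~2. Second, after the moving knife the paper recurses directly on $G-L$ with the original utilities (the remaining agents stay $\tfrac12$-bounded since $\mms^{(n-\ell)}(G-L,u_j)\ge\mms_j$); your contraction of the leftover into $c$ and appeal to the ``strengthened hypothesis'' also works but is slightly less direct. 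Finally, your worry about cycle blocks is unfounded: the replacement bundle $(Q^\ast_j\setminus B')\cup R$ is automatically connected (the leftover $R$ attaches to $c$ along the Hamiltonian path), and the parts not meeting $L$ cannot wrap around through $v_1$ since $v_1\in L$; no special choice of knife direction is needed.
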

\begin{proof}
We observe that to prove the theorem it suffices to show the following statement: 

\medskip
\noindent
{\sl For every connected block-cactus graph and any set of $\frac{1}{2}$-mms-bounded agents, a $\frac{1}{2}$-mms-allocation exists.}

\medskip
\noindent
Indeed, if we proved it, then the theorem follows immediately by Lemmas~\ref{prop_old} and~\ref{obs:heavy}.

We will prove the statement above by induction on the number of vertices in the graph $G$.
As a base case, suppose that $G$ is biconnected; in particular, this covers the case that $G$ has one vertex.
Then, $G$ is either a cycle or a complete graph and the existence of the required mms-allocation is well known (see, e.g., Akrami and Garg~\cite{ag:24} and Lonc and Truszczynski~\cite{lt:20}).

Thus, suppose that $G$ is a connected block-cactus graph that has at least one cut-vertex.
Let $N=[n]$ be a set of $\frac{1}{2}$-mms-bounded agents.

We will show that for this graph and this set of agents, a $\frac{1}{2}$-mms-allocation exists. 
For every agent $i\in N$ we denote by $u_i$ her utility function and by $\cP_i=( P_1^i,\ldots,P_n^i )$ her mms-partition.
Recall that we defined $\mms_i:=\mms^{(n)}(G,u_i)$.

Let $B$ be the set of vertices of a terminal block of $G$ and let $v$ be the unique cut-vertex contained in $B$.
Let $B':=B \setminus \{ v\}$. We consider two cases.

\smallskip
\noindent
{\it Case 1: For each agent $i\in N$, it holds $u_i(B')< \mms_i$.} 

\smallskip
Let $G'$ be the graph obtained from $G$ by removing all vertices in $B'$.
For each agent $i$ we define a new utility function $u'_i$ on $V(G')$ as follows: $u'_i(v)=u_i(B)$ and $u'_i(w):=u_i(w)$ for every $w\not=v$. Clearly, the agents with these modified utility functions may not be $\frac{1}{2}$-mms-bounded because it can happen that $u_i'(v)>\frac{1}{2}\mms^{(n)}(G',u_i')$. 

However, by the induction hypothesis applied for every connected induced subgraph $H$ of $G'$, there exists a $\frac{1}{2}$-mms-allocation for the graph $H$ and any set of $\frac{1}{2}$-mms-bounded agents with utility functions defined on the set $V(H)$.
It now follows from Lemmas~\ref{obs:heavy} and~\ref{prop_old} that there exists a $\frac{1}{2}$-mms-allocation for the graph $G'$ and an arbitrary set of agents.
In particular, this holds for the set of agents $N$ with utility function $u'_i$ for each agent $i\in N$.
Let $\cA$ denote such an allocation.
In this allocation, every agent $i\in N$ receives a bundle of value at least $\frac{1}{2}\mms^{(n)}(G',u'_i)$. 

It follows from the inequality $u_i(B')< \mms_i$ that for every agent $i$, the set $B$ is contained in one of the bundles, say $P_1^i$, of the mms-partition $\cP_i$.
Hence, since $u_i'(P_1^i \setminus B')=u_i(P_1^i)$ for each $i$, the value of each bundle in the partition $( P_1^i \setminus B',P_2^i,\ldots,P_n^i)$ of $V(G')$ under $u_i'$ is at least $\mms_i$. Thus, 
\[
\mms^{(n)}(G',u'_i)\geq \mms_i,
\]
and the allocation obtained from $\cA$ by extending the (unique) bundle in $\cA$ containing $v$ with the set $B'$ is a $\frac{1}{2}$-mms-allocation for the graph $G$ and the set of agents $N$ with utility function $u_i$ for each agent $i\in N$. This completes the proof for Case 1.

\smallskip
\noindent
{\it Case 2: There is an agent $i' \in N$ such that $u_{i'}(B')\geq \mms_{i'}$.} 

\smallskip
Let $v_1,\ldots,v_m$, where $v_m=v$, be the consecutive vertices of a Hamiltonian path in the block whose vertex set is $B$.
Let $j_1$ be the least index $j<m$ such that, for some agent $i_1\in N$, the set $L_1:=\{ v_1,\ldots,v_{j}\}$ is of value at least $\frac{1}{2}\mms_{i_1}$.
This agent receives the set $L_1$ and quits the game. Note that $L_1$ and $i_1$ are well-defined, as we are in Case 2.

Next, we define $j_2$ to be the least index $j<m$ such that, for some agent $i_2\in N \setminus \{i_1\}$, the set $L_2:=\{ v_{j_1+1},\ldots,v_{j}\}$ is of value at least $\frac{1}{2}\mms_{i_2}$, if such index exists. The agent $i_2$ receives the set $L_2$ and quits the game.
We continue such distribution of sets of vertices to agents as long as possible.
Let $\ell \geq 1$ be the number of agents that left the game in this phase and let $j_\ell$ be the largest index $j<m$ defined on the way.
So, in the last step, some agent $i_\ell\in N \setminus \{ i_1,\ldots,i_{\ell-1}\}$, received the set $L_\ell:=\{ v_{j_{\ell-1}+1},\ldots,v_{j_\ell}\}$, left the game, and for all agents in $i\in N':=N \setminus \{ i_1,\ldots,i_\ell\}$, the value of the set $R:=\{ v_{ j_\ell+1},\ldots,v_{m-1}\}$ is less than $\frac{1}{2}\mms_i$. 

Let $L:=L_1\cup\ldots\cup L_\ell$ and define $G':=G-L$. 
Clearly, the graph $G'$ is a connected block-cactus graph.
Furthermore, $G'$ has fewer vertices than $G$, as $\emptyset \neq L_1 \subseteq L$.

Consider any agent $i\in N'$. We shall prove that
\begin{equation}\label{eq3}
\mms^{(n-\ell)}(G',u_i)\geq \mms_i.
\end{equation}
We observe that $u_i(L_q)<\mms_i$, for every $q\in [\ell]$.
Indeed, by the definition of $L_q=\{ v_{j_{q-1}+1},\ldots,v_{j_q}\}$, we have $u_i(L_q \setminus \{ v_{j_q}\})<\frac{1}{2}\mms_i$.
Since the agent $i$ is $\frac{1}{2}$-mms-bounded, $u_i(L_q)<\mms_i$, as claimed.
Hence, 
\begin{equation}\label{eq2}
u_i(L)<\ell\cdot \mms_i.
\end{equation}

By this inequality and the inequality $u_i(R)<\frac{1}{2}\mms_i$ observed earlier, we conclude that $u_i(B')=u_i(L\cup R)<(\ell+1)\cdot \mms_i$.
Hence, at most $\ell +1$ bundles of the mms-partition $\cP_i$ intersect $B'$.
Indeed, all such bundles (each of value at least $\mms_i$ which is in turn positive, as agent $i$ is $\frac{1}{2}$-mms-bounded), except possibly one, must be contained in $B'$.
It follows from the observation that every bundle intersecting but not contained in $B'$ must contain the cut-vertex $v$. 

Suppose first that the set $L\subseteq B'$ intersects at most $\ell$ bundles of the mms-partition $\cP_i$. Then, at least $n-\ell$ bundles of $\cP_i$ are contained in $G'$. Thus, 
\[
\mms^{(n-\ell)}(G',u_i)=\pmms^{(n-\ell)}(G',u_i)\geq \mms_i
\]
which completes the proof of the inequality (\ref{eq3}) in this case. 

Now, assume that $L$ intersects exactly $\ell+1$ bundles of the mms-partition ${\cal P}_i$. We can assume without loss of generality that the bundles $P_1^i,\ldots,P_{\ell}^i$ are contained in $B'$ and the bundle $P_{\ell+1}^i$ is not contained in $B'$, i.e. it contains $v$. Since 
\begin{align*}
u_i(L)+u_i(R)=u_i(B')&= \sum_{q=1}^\ell u_i(P_q^i)+u_i(P_{\ell+1}^i\cap B')\\
&\geq \ \ell\cdot \mms_i+u_i(P_{\ell+1}^i\cap B'), 
\end{align*}
by this inequality and the inequality (\ref{eq2}), we have 
\[
u_i(R)> u_i(P_{\ell+1}^i\cap B'). 
\]
We replace in the bundle $P_{\ell+1}^i$ the set $P_{\ell+1}^i\cap B'$ by the set $R$. This way we get a bundle contained in $G'$ of value larger than $\mms_i$ for the agent $i$. This bundle together with the bundles $P_{\ell+2}^i,\ldots,P_n^i$ form a partition of the vertex set of $G'$ such that the value of each of these bundles for the agent $i$ is at least $\mms_i$. Thus, $\mms^{(n-\ell)}(G',u_i)\geq \mms_i$. This completes the proof of the inequality (\ref{eq3}).

Given that the agents in $N$ are $\frac{1}{2}$-mms-bounded for the graph $G$ and using the inequality (\ref{eq3}), we have 
\[
u_i(w)<\frac{1}{2}\cdot \mms_i\leq \frac{1}{2}\cdot \mms^{(n-\ell)}(G',u_i),
\]
for every vertex $w \in V(G')$ and every agent $i\in N'$. Therefore, the agents in $N'$ are $\frac{1}{2}$-bounded for the graph $G'$. 

By the induction hypothesis applied for the graph $G'$ and the set $N'$ of $\frac{1}{2}$-bounded agents, there exists a $\frac{1}{2}$-mms-allocation, say $\cal A$, for this graph $G'$ and this set of agents. By inequality (\ref{eq3}), in this allocation, each agent receives a bundle of value at least $\frac{1}{2}\cdot \mms^{(n-\ell)}(G',u_i)\geq \frac{1}{2}\cdot \mms_i$. 

The allocation $\cA$ together with the sets $L_1,\ldots,L_\ell$ allocated to the agents of $N \setminus N'$ form a $\frac{1}{2}$-mms-allocation for the graph $G$ and the set of agents $N$. 
\end{proof}

\section{Complete multipartite graphs}
In this section we prove Theorem~\ref{thm:multipartiteintro}, which we restate here using the introduced notation.

\begin{theorem}
For every complete multipartite graph with at least two parts, and any set of agents, a $\frac{1}{4}$-mms-allocation exists.
\end{theorem}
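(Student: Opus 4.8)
The plan is to follow the same two-step template used for Theorem~\ref{thm:block-cacti}: first reduce to bounded agents, then give a direct construction. Since the class of complete multipartite graphs is hereditary, Lemmas~\ref{prop_old} and~\ref{obs:heavy} reduce the task to producing a $\frac14$-mms-allocation for a complete multipartite graph $G$ with parts $V_1,\dots,V_r$ ($r\ge 2$) and an arbitrary set of $\frac14$-mms-bounded agents $1,\dots,n$. (Every connected induced subgraph of a complete multipartite graph is again complete multipartite with at least two parts, or a single vertex, and a single vertex admits no $\frac14$-mms-bounded agent, so nothing further is needed for the reduction.) Put $\beta_i:=\mms^{(n)}(G,u_i)$, which is positive; then $u_i(v)<\frac14\beta_i$ for every vertex $v$, while $u_i(V)\ge n\beta_i$ by~(\ref{eq4}). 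Summing the first inequality over all vertices and comparing with the second yields the key size bound $|V|>4n$. I would also record a structural fact that gets used twice: in any mms-partition for agent $i$, a bundle has value $\ge\beta_i>0$ but no single vertex is worth $\frac14\beta_i$, so every bundle has at least two vertices; being connected it meets at least two parts, hence it meets $V\setminus V_1$, where $V_1$ is a part of maximum size. As the $n$ bundles are disjoint, this gives $n\le|V\setminus V_1|$.

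Next I would split $V$ into two ``sides'' that behave like the parts of a complete bipartite graph. I would partition $V_1,\dots,V_r$ into two groups whose unions $A$ and $B$ both have at least $n$ vertices: if $|V_1|\ge n$, take $A=V_1$ and $B=V\setminus V_1$ (legitimate since $n\le|V\setminus V_1|$); otherwise every part has fewer than $n$ vertices, so adding parts to $A$ until its size first reaches $n$ gives $n\le|A|<2n$, and then $|B|=|V|-|A|>4n-2n>n$. The point of this split is that $A$ and $B$ are unions of parts, so for every nonempty $S\subseteq A$ and nonempty $T\subseteq B$ the set $S\cup T$ meets two parts and has at least two vertices, hence induces a connected subgraph of $G$; moreover adjoining any further vertices to such a set preserves connectivity.

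Now I would construct the allocation. Call agent $i$ an \emph{$A$-agent} if $u_i(A)\ge\frac12 u_i(V)$, and a \emph{$B$-agent} otherwise; then every $A$-agent has $u_i(A)\ge\frac n2\beta_i$ and every $B$-agent has $u_i(B)\ge\frac n2\beta_i$. Let $k$ be the number of $A$-agents. On side $A$ I would reserve $n-k$ vertices, one ``connector'' per $B$-agent, and run a moving knife over the remaining (at least $k$) vertices of $A$: scanning them in a fixed order, as soon as the current block is worth $\ge\frac14\beta_i$ to some not-yet-served $A$-agent $i$, cut it off, hand it to such an agent, and restart the block; I would do the symmetric thing on side $B$, reserving $k$ connectors in $B$ for the $A$-agents. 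Each agent then receives its moving-knife block on its favoured side together with one reserved connector from the other side, and any leftover vertices are dumped into an arbitrary bundle. The resulting bundles partition $V$, each meets both $A$ and $B$ and is therefore connected, and each is worth at least $\frac14\beta_i=\frac14\mms^{(n)}(G,u_i)$ to its owner, which is exactly what is required.

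The one genuine obstacle is to show the moving knife never stalls, i.e.\ that on side $A$ all $k$ $A$-agents get served before the scanned part of $A$ runs out (and symmetrically on $B$). Suppose only $j<k$ of them have been served and the yet-unscanned part of $A$ is worth less than $\frac14\beta_i$ to every remaining $A$-agent $i$. For such an $i$, every block already cut off was worth less than $\frac14\beta_i$ right before its last vertex was appended, hence less than $\frac12\beta_i$ in total, and the $n-k$ reserved connectors are worth less than $\frac14\beta_i$ apiece; therefore $u_i(A)<\frac j2\beta_i+\frac14\beta_i+\frac{n-k}{4}\beta_i\le\frac{n+k-1}{4}\beta_i<\frac n2\beta_i$, the last inequality because $k\le n$. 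This contradicts $u_i(A)\ge\frac n2\beta_i$, so the knife cannot stall. Everything else — the size checks, connectivity of the final bundles, and the fact that every vertex ends up allocated — is routine bookkeeping, so I expect the write-up to be short once this claim is in place.
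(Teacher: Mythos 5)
Your proof is correct and follows essentially the same strategy as the paper's: reduce to $\frac14$-mms-bounded agents via the hereditary lemma, group the parts into two sides of size at least $n$ each, classify agents by their preferred side, run a moving knife producing blocks of value in $[\frac14\mms_i,\frac12\mms_i)$, and attach a connector vertex from the opposite side to make each bundle connected. The only differences are minor bookkeeping choices (you derive $|V|>4n$ by summing vertex values where the paper counts bundle sizes to get $|V|\geq 5n$, and you reserve the connector vertices up front and fold their value into the non-stalling estimate, whereas the paper shows after the knife finishes that enough unassigned vertices remain); both variants go through.
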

\begin{proof}
Let $G$ be a complete multipartite graph with at least two parts.  
Since the class of multipartite graphs is hereditary and the graph $G$ is connected, it follows from Lemmas~\ref{obs:heavy} and~\ref{prop_old} that it suffices to prove the theorem for a set of $\frac{1}{4}$-bounded agents.
Let $N:=[n]$ be a set of such agents and, for each $i\in N$, let $u_i$ be the utility function for the agent $i$.
Recall that $\mms_i:=\mms^{(n)}(G,u_i)$. 

Since each vertex of $G$ is of value less than $\frac{1}{4}\mms_i$ for any agent $i\in N$, each bundle of an mms-partition of any agent has at least $5$ vertices.
Thus, the total number of vertices in bundles of this mms-partition is at least $5n$.
Consequently, $|V(G)|\geq 5n$.

Let $X_1,\ldots,X_k$, for some $k \geq 2$, be the parts of $V(G)$, i.e., two vertices of $G$ are adjacent if and only if they belong to distinct parts.
We can assume without loss of generality that $|X_1|\leq\ldots\leq|X_k|$.
Let $\ell$ be the smallest index such that $|X_1|+\ldots+|X_\ell|\geq n$.
We observe that $\ell<k$.
Indeed, suppose $\ell=k$.
Then, $|X_1|+\ldots+|X_{k-1}|<n$.
Since $|V(G)|\geq 5n$, we have $|X_k|>4n$.
Consider an mms-partition $\cP$, for any agent in $N$.
Let $t$ be the number of bundles in this partition intersecting $X_k$.
Since $X_k$ is an independent set of vertices in $G$ and the bundles of $\cP$ are connected, each of these $t$ bundles must contain a different vertex of $X_1\cup\ldots\cup X_{k-1}$.
Thus, the set $X_1\cup\ldots\cup X_{k-1}$ contains at least $t$ vertices of the bundles of $\cP$ intersecting $X_k$ and at least $5(n-t)$ vertices of the bundles not intersecting $X_k$. Hence,
\[
|X_1|+\ldots+ |X_{k-1}|\geq t+5(n-t)=5n-4t\geq n,
\]
a contradiction, so $\ell<k$ as claimed.

We will show now that $|X_{\ell+1}|+\ldots+|X_k|\geq n$. This statement is obviously true if $|X_{\ell+1}|\geq n$, so assume that $|X_{\ell+1}| <n$. Then, $|X_\ell|\leq|X_{\ell+1}|<n$. By the definition of $\ell$, we have $|X_1|+\ldots+|X_{\ell-1}|<n$, so $|X_1|+\ldots+|X_{\ell}|<2n$. Since $|V(G)|\geq 5n$,
\begin{align*}
|X_{\ell+1}|+\ldots+|X_k|&= |V(G)|-(|X_1|+\ldots+|X_\ell|) \\ &\geq 5n-2n\geq n.
\end{align*}

We define $V_1:=X_1\cup\ldots\cup X_{\ell}$ and $V_2:=X_{\ell+1}\cup\ldots\cup X_k$. We have just proved that $|V_1|\geq n$ and $|V_2|\geq n$. By the definition of $G$, every vertex of $V_1$ is joined by an edge with every vertex of $V_2$. 
We partition the set of agents into two subsets $N_1$ and $N_2$. The set $N_1$ consists of all agents $i$ such that $u_i(V_1)\geq u_i(V_2)$ and $N_2:=N \setminus N_1$. By the inequality (\ref{eq4}), for each $i\in N$,  it holds that
\[
u_i(V_1)+u_i(V_2)=u_i(V(G))\geq n\cdot \mms_i.
\]
Hence, 
\begin{equation}\label{multi0}
\begin{split}
u_i(V_1)\geq \frac{n}{2} \cdot \mms_i & \quad \text{ if } i\in N_1,\\
u_i(V_2)\geq \frac{n}{2} \cdot \mms_i & \quad \text{ if } i\in N_2.
\end{split}
\end{equation} 

We now proceed to define the required allocation of the vertices of $G$ to the agents.
We assume, without loss of generality, that $N_1=\{ 1,\ldots,n_1\}$, for some $n_1\leq n$.
Let $o_v=(v_1,\ldots,v_{m_1})$, where $m_1:=|V_1|$, be an arbitrary ordering of the vertices of $V_1$.
The vertices of $V_1$ are assigned to agents in the order $o_v$ according to the following iterative procedure.

Suppose we have already assigned to some agents of $N_1$ sets of vertices from $V_1$, each set of value at least $\frac{1}{4}$ of the corresponding maximin share.
Let $N_1' \subseteq N_1$ be the set of agents that did not receive any set.
Let $(v_s,\ldots,v_{m_1})$ be the suffix of $o_v$ consisting of vertices not assigned yet.
Let $p$ be the least $t\geq s$ such that there is an agent $k \in N_1'$ for whom the value of the set $L=\{v_s\ldots,v_t\}$ is at least $\frac{1}{4}\mms_k$.

Notice that the value of the set $L$ for any agent $j \in N_1'$, including agent $k$, is 
\begin{equation}\label{multi1}
u_j(L)<\frac{1}{2}\mms_j.
\end{equation}
Indeed, by the definition of $p$, we have $u_j(L \setminus \{ v_p\})<\frac{1}{4}\mms_j$ and $u_j(v_p)<\frac{1}{4}\mms_j$.

We assign the set $L$ to agent $k$.
If all agents have their assigned sets, we terminate, and otherwise we continue the process with the subsequence $(v_{p+1},\ldots,v_{m_1})$.

Let us verify that the procedure described above assigns a set of vertices of a total value at least $\frac{1}{4}\mms_i$ to {\sl each} agent $i\in N_1$.
Suppose this is not the case, and only $q<n_1\leq n$ agents in $N_1$ received such sets.
Let $j\in N_1$ be an agent who did not receive a set.
Then the total value, for agent $j$, of the vertices of $V_1$ that were not assigned to any agent is less than $\frac{1}{4}\mms_j$. 
On the other hand, by the inequalities (\ref{multi1}) and (\ref{multi0}), this value is greater than $u_j(V_1)-\frac{q}{2}\mms_j\geq\frac{n-q}{2}\mms_j\geq\frac{1}{2}\mms_j$, a contradiction. Thus, we conclude that each agent $i\in N_1$ is assigned a subset of $V_1$ with total value at least $\frac{1}{4}\mms_i$ for that agent.
An analogous procedure assigns to every agent of $i\in N_2$ a subset of $V_2$ with total value at least $\frac{1}{4}\mms_i$ for that agent. 

Let $A_i$ denote the set of vertices assigned to agent $i\in N_1$ according to our procedure.
Clearly, the sets $A_i$ may not be bundles, i.e. they may not induce connected subgraphs of $G$.
Therefore, to ensure connectivity, we will now extend these sets with vertices from $G$ that have not yet been assigned to any agents. 
We can assume without loss of generality that  $n_1$ is the agent in $N_1$ who receives her set last. By the inequality (\ref{multi1}), the total value for agent $n_1$ of all sets assigned to agents in $N_1$ is less than $\frac{n_1}{2}\mms_{n_1}$. 
Thus, the total value for this agent of the unassigned vertices of $V_1$ is greater than $u_{n_1}(V_1)-\frac{n_1}{2}\mms_{n_1}\geq \frac{n-n_1}{2}\mms_{n_1}=\frac{n_2}{2}\mms_{n_1}$. Since the value of each vertex for this agent is smaller than $\frac{1}{4}\mms_{n_1}$, the number of unassigned vertices of $V_1$ is larger than $n_2$. We similarly show that the number of unassigned vertices of $V_2$ is larger than $n_1$. 

For each agent $i\in N_1$ (respectively, $i\in N_2$) we select a different unassigned vertex from $V_2$ (respectively, from $V_1$) and add it to the set $A_i$. The resulting set, say $A_i'$, is a (connected) bundle, which is assigned to agent $i$. There might still be some vertices which remain unassigned. However, for each $i \in N$, as $A_i'$ intersects both $V_1$ and $V_2$, every unassigned vertex has a neighbor in $A_i'$. Thus these vertices can be distributed arbitrarily. This way we obtain a $\frac{1}{4}$-mms-allocation for the graph $G$. \end{proof}

\section{Split graphs}
Now let us focus on split graphs. We prove Theorem~\ref{thm:splitintro}, which we again restate below.

\begin{theorem}
For every connected split graph and every set of agents of at most $2^k$ types, there exists a $\frac{3}{7\cdot 2^k - 3}$-mms-allocation.
\end{theorem}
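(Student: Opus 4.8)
The plan is to follow the same template used for block-cactus graphs: by Lemmas~\ref{obs:heavy} and~\ref{prop_old} it suffices to construct, for every connected split graph $G$ and every set of $\alpha$-mms-bounded agents of at most $2^k$ types (with $\alpha := \tfrac{3}{7\cdot 2^k-3}$), an $\alpha$-mms-allocation. Fix a split partition $V(G) = K \cup I$ with $K$ a clique and $I$ independent. The key structural observation is that \emph{any} connected subgraph of $G$ with more than one vertex must contain a vertex of $K$ (an edge with both endpoints in $I$ is impossible), and in fact a bundle with many vertices uses at most $|K|$ clique-vertices but can absorb arbitrarily many leaves of $I$ that are attached to the clique part of the bundle. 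So the clique $K$ is the scarce "connector" resource: at most $|K|$ bundles can be non-trivial simultaneously, which is exactly why a bound in terms of the number of types (rather than the number of agents) is the natural target — agents of the same type are interchangeable, so it is enough to serve one representative per type well and argue the rest can be grouped.

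The main step I would carry out is a \emph{recursive halving on the types}, which is where the $2^k$ and the peculiar constant $\tfrac{3}{7\cdot 2^k-3}$ come from. Order the (at most $2^k$) types and, as in the multipartite proof, use $u_i(V) \ge n\cdot\mms_i$ (inequality~\eqref{eq4}) to split $V(G)$ into two vertex sets $V_1, V_2$ and split the agent set into $N_1,N_2$ so that each agent $i \in N_j$ has $u_i(V_j)$ at least half of $\tfrac1n u_i(V) \ge \mms_i$ — the subtlety being to do the vertex split so that both $G[V_1]$ and $G[V_2]$ are still connected split graphs and each side keeps enough of $K$ to serve its agents; a natural choice is to keep the whole clique on one side and cut $I$, or to cut $K$ into two halves and hand each half its privately-attached independent vertices. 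Recursing $k$ levels halves the number of types each time, bottoming out at instances with a single type, for which an allocation to $n$ identical agents on a connected split graph should be obtained directly (order the clique vertices, greedily peel off connected prefixes of value $\ge$ the required fraction, charging each peeled bundle at most twice that fraction exactly as in the multipartite argument, then attach leftover independent vertices to restore connectivity). Tracking how the guaranteed fraction degrades through the $k$ recursion levels — each level loses roughly a factor coming from the "value at most twice the threshold per peeled bundle" slack plus a $\tfrac12$ from the value split — yields the recurrence whose solution is $\tfrac{3}{7\cdot 2^k - 3}$.

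The hard part will be the bookkeeping at the recursive step: making the partition $(V_1,V_2)$ simultaneously (a) connected on each side, (b) balanced in value for \emph{every} agent relative to her own $\mms_i$, and (c) leaving enough unassigned clique- and independent-vertices on each side to connect up all the greedily-formed bundles — and then showing that the bundles formed deep in the recursion can be re-connected to the rest of $G$ using vertices of $K$ without value loss. A secondary obstacle is verifying the base case bound for a single type: one must be careful that after greedily peeling prefixes and reserving connectors, no agent is starved, which (as in the multipartite proof) follows from a counting argument using $\alpha$-mms-boundedness, but the constant there must be chosen to make the induction close exactly. I would set up the recurrence $\alpha_0 = c$ for the single-type base case and $\alpha_{j} = \tfrac{\alpha_{j-1}}{2+\text{(slack)}}$ at each merge, solve it, and reverse-engineer the base constant so that $\alpha_k = \tfrac{3}{7\cdot 2^k-3}$.
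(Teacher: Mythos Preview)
Your proposal has a genuine gap at the core recursive step. You want to ``halve the number of types'' by splitting $V(G)=V_1\cup V_2$ and $N=N_1\cup N_2$, but the two natural ways to do this are incompatible. If you split agents according to which side carries more of \emph{their} value (as in the multipartite proof), you get $u_i(V_j)\ge \tfrac{n}{2}\mms_i$ for $i\in N_j$, but there is no reason the number of \emph{types} is halved: all $2^k$ types could prefer the same side. Conversely, if you first split the type set into two halves of size $2^{k-1}$ and then look for a vertex split, nothing forces $V_j$ to be valuable to every type assigned to side $j$; two types can have essentially opposite valuations over $I$. Your sketch also never explains how a split of $K$ would route $I$-vertices with neighbors in both halves, nor why the resulting sides are connected split graphs with enough clique vertices to serve all agents there. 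Without a mechanism that simultaneously halves the type count and preserves a uniform value guarantee, the recursion cannot get off the ground, and the constant $\tfrac{3}{7\cdot 2^k-3}$ remains unmotivated.

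The paper's argument is structurally different and avoids splitting the graph altogether. It starts with an mms-partition $S_i$ for each of the $2^k$ types on the \emph{full} graph $G$, and then runs an induction on $\ell$ whose purpose is not to shrink the graph but to make the packings $S_1,\dots,S_{2^\ell}$ \emph{conflict-free on $I$}: after the induction, every vertex of $I$ lies in a bundle of exactly one $S_i$. The inductive step takes two already conflict-free sequences (for the first and second half of the types), so each $v\in I$ is now claimed by exactly two bundles; a greedy alternating procedure assigns $v$ to one of them, and an ordering argument shows each bundle keeps at least $\tfrac12\bigl(u_i(Q)-u_i(v_1)\bigr)$ of its value, which with $\alpha$-boundedness gives the recurrence $\beta_{k,\ell+1}=\tfrac12\bigl(\beta_{k,\ell}-\tfrac{3}{7\cdot 2^k-3}\bigr)$ solving to $\beta_{k,k}=\tfrac{4}{7\cdot 2^k-3}$. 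Only then is the problem pushed onto the clique: each $I$-vertex is merged into its designated $K$-neighbor, yielding modified utilities on the complete graph $K$, and Theorem~\ref{thm:completegraph34} supplies a $\tfrac34$-allocation there, producing the final $\tfrac34\cdot\tfrac{4}{7\cdot 2^k-3}=\tfrac{3}{7\cdot 2^k-3}$. Note in particular that the Akrami--Garg bound for complete graphs is essential to the paper's constant; your outline never invokes it, which is another sign that the mechanisms do not line up.
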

\begin{proof}
Let $G$ be a connected split graph and let $I$ be the maximal independent set in $G$. 
Consider a set of $n$ agents with utility functions $u_1, \ldots, u_n$, ordered in such a way that each of the agent types is represented among the first $2^k$ agents (so that $u_i$ is the utility function of the $i$-th type of agents).

By Lemmas~\ref{prop_old} and~\ref{obs:heavy}, without loss of generality we may assume that all agents are $\frac{3}{7\cdot 2^k - 3}$-mms-bounded.

The general outline of our proof is the following. For each agent type $i$, we define $S_i$ to be an mms-partition for agents of this type. Then, for each vertex $v$ in $I$, we remove $v$ from bundles in partitions $S_i$ corresponding to all types of agents except one, effectively assigning $v$ to a single type of agent; we also make sure that each bundle retains a constant (depending on $k$) fraction of its original utility. This way we transform our sequence of $(G,n)$-partitions into a sequence of $(G,n)$-packings. Finally, we reduce the problem to mms-allocation on a complete graph.

We start by proving the following claim.

\begin{claim}
\label{claim_leaf_assignment}
There is a sequence of $2^k$ $(G,n)$-packings $S_1, S_2, \ldots, S_{2^k}$ such that
\begin{itemize}
\item[(i)] for every $i$ and every bundle $P$ in $S_i$, we have $u_i(P)\geq \frac{4}{7\cdot 2^k - 3} \mms^{(n)}(G, u_i)$,
\item[(ii)] for every vertex $v\in I$, there is exactly one $i$ such that $v$ is contained in a bundle in $S_i$.
\end{itemize}
\end{claim}

We will prove Claim~\ref{claim_leaf_assignment} using induction. For this reason, we reformulate its statement in the following way. 

\begin{claim}
\label{claim_leaf_ass_inductive}
For every choice of nonnegative utility functions $u_1, u_2, \ldots, u_{2^\ell}$, there exists a sequence of $2^\ell$ $(G,n)$-packings $( S_1, S_2, \ldots, S_{2^\ell} )$ such that 
\begin{itemize}
\item[(i)] for every $i$ and every bundle $P$ in $S_i$, we have $u_i(P)\geq \beta_{k,\ell} \mms^{(n)}(G, u_i)$,
\item[(ii)] for every vertex $v\in I$, there is exactly one $i\in \lbrace 1, 2, \ldots, 2^\ell\rbrace$ such that $v$ is contained in a bundle in $S_i$,
\end{itemize}
where $\beta_{k,\ell}$ is defined so that $\beta_{k, 0}=1$ and $\beta_{k,\ell+1}=\frac{1}{2}\left(\beta_{k, \ell} - \frac{3}{7\cdot 2^k - 3}\right)$.
\end{claim}
\begin{proof}
We will prove Claim~\ref{claim_leaf_ass_inductive} by induction on $\ell$. 
Note that for $\ell=0$ the claim follows directly from the definition of $mms_1$, i.e. we can take $S_1$ to be an mms-partition of $G$ for the agent with utility function $u_1$.

Now suppose that the claim is true for some $\ell\geq 0$ and consider a sequence of utility functions $u_1, \ldots, u_{2^{\ell+1}}$. Let $( R_1, R_2, \ldots, R_{2^\ell} )$ be the sequence of $2^\ell$ $(G,n)$-packings obtained by the induction hypothesis for utility functions $u_1,  \ldots, u_{2^\ell}$ and $( R_{2^\ell+1}, R_{2^\ell+2}, \ldots, R_{2^{\ell+1}} )$ be the sequence of $2^\ell$ $(G,n)$-packings obtained by the induction hypothesis for utility functions $u_{2^\ell+1}, u_{2^\ell+2}, \ldots, u_{2^{\ell+1}}$. Note that the concatenation of those sequences satisfies the following conditions 
\begin{itemize}
\item[(i')] for every $i\in \lbrace 1, 2, \ldots, 2^{\ell+1}\rbrace$ and every bundle $P$ in $R_i$, we have $u_i(P)\geq \beta_{k,\ell} \mms^{(n)}(G, u_i)$,
\item[(ii')] for every vertex $v\in I$, there are exactly two indices $i\in \lbrace 1, 2, \ldots, 2^{\ell+1}\rbrace$ such that $v$ is contained in a bundle in $R_i$.
\end{itemize}

Now we will assign each vertex from $I$ to one of the two bundles it is contained in, and remove it from the other bundle. We assume that initially no vertex in $I$ is assigned to any bundle. We decide on those assignments in a specific order, defined by the following procedure.

\begin{enumerate}[noitemsep]
\item Define a sequence of packings $R^\prime_1, R^\prime_2, \ldots, R^\prime_{2^{\ell+1}}$ to be identical to $R_1, R_2, \ldots, R_{2^{\ell+1}}$.
\item Pick a bundle $P$ from any packing $R^\prime_i$ such that $P$ contains an unassigned vertex from $I$.
\item Let $v$ be an unassigned vertex from $P \cap I$ that maximizes $u_i(v)$.
\item Assign $v$ to $P$ and, if $v$ is in another bundle $P'\in R^\prime_j$, remove $v$ from $P'$ (thereby altering $R^\prime_j$).
\item Go back to step 2 and pick $P'$ in that step, if $P'$ was defined in step 4 and contains an unassigned vertex; otherwise pick any bundle in step 2 satisfying the condition in that step.
\end{enumerate}
Note that after the procedure, we retain a sequence of $2^{\ell+1}$ $(G,n)$-packings; let us denote them by $S_1, S_2, \ldots, S_{2^{\ell+1}}$. Note that each set in each $S_i$ is indeed a bundle (i.e. induces a connected subgraph of $G$), because it was obtained from a bundle of $R_i$ by removing a number of vertices from $I$. Moreover, it clearly satisfies the condition that for every vertex $v\in I$, there is exactly one $i\in \lbrace 1, 2, \ldots, 2^{\ell+1}\rbrace$ such that $v$ is contained in a bundle in $S_i$ (i.e. the condition (ii) of Claim~\ref{claim_leaf_ass_inductive} for $\ell+1$).

Now, for a bundle $P$ in $S_i$, we will find a lower bound on $u_i(P)$. Let $Q$ be the bundle from $R_i$ that contains $P$, and let $V_Q=(v_1, v_2, \ldots)$ be a sequence of vertices of $Q$, ordered in descending order of value with respect to $u_i$, i.e. $u_i(v_1)\geq u_i(v_2)\geq \ldots$. Moreover, let $V_P=(w_1, w_2, \ldots)$ be a sequence of vertices of $P$, ordered in descending order of value with respect to $u_i$. To avoid tedious case analysis for $j$ greater than the size of $P$, we will assume that, for such $j$, $v_j$ is a phantom vertex of utility $0$; we will also assume that vertices admitting the same value of $u_i$ are in the same order in which they have been considered in the procedure, and vertices outside $I$ are in the same order in both $V_P$ and $V_Q$. 

We will argue that for every $j$, $u_i(w_j)\geq u_i(v_{2j})$. Indeed consider the set of vertices $X=\lbrace v_1, v_2, \ldots v_{2j} \rbrace$. If $X$ contains an even number of vertices from $I$, then each time a vertex from $X$ was removed from $Q$, some vertex from $X\cap I$ was assigned to $P$ immediately afterwards. In the other case take $m<j$ such that $X$ contains $2m+1$ vertices from $I$ and note that immediately after each of the first $m$ vertices from $X\cap I$ have been removed from $Q$, some vertex from $X\cap I$ was assigned to $P$, and $P$ contains additional $2j-2m-1$ vertices from $X\setminus I$. Therefore, in both cases $P$ contains at least $j$ vertices from $X$, which proves that $u_i(w_j)\geq u_i(v_{2j})$ as desired.

Now we are ready to bound $u_i(P)$. By the claim proved in the preceding paragraph and the inequality $u_i(v_{2j}) \geq u_i(v_{2j+1})$, we observe that 
\begin{align*}
u_i(P) \geq  \sum_j u_i(v_{2j}) &\geq  \frac{1}{2}\left( \sum_j u_i(v_{2j}) + \sum_j u_i(v_{2j+1}) \right) \\
&= \frac{1}{2}\left(u_i(Q)-u_i(v_1)\right).
\end{align*}

Recall that (without loss of generality) we assumed that agents are $\frac{3}{7\cdot 2^k - 3}$-mms-bounded and, by (i'), $u_i(Q)\geq \beta_{k,\ell} \mms^{(n)}(G, u_i)$, so
\[
u_i(P) \geq \frac{1}{2}\left(\beta_{k,\ell} - \frac{3}{7\cdot 2^k - 3} \right)\mms^{(n)}(G, u_i),
\]
which, by the definition of $\beta_{k, \ell + 1}$, coincides with (i) for $\ell+1$ and therefore, by induction, completes the proof of Claim~\ref{claim_leaf_ass_inductive}.
\end{proof}

Note that applying Claim~\ref{claim_leaf_ass_inductive} with $\ell = k$ immediately yields Claim~\ref{claim_leaf_assignment}, provided that $\beta_{k, k} = \frac{4}{7 \cdot 2^k - 3}$, which follows from the following straightforward calculation.

By unwrapping the recurrence 
$$\beta_{k,\ell+1}=\frac{1}{2}\left(\beta_{k, \ell} - \frac{3}{7\cdot 2^k - 3}\right), \quad \beta_{k, 0}=1$$
we observe that
$$
\beta_{k, k} = \frac{1}{2^k} - \frac{3}{2\left(7\cdot 2^k - 3\right)}\left(\frac{1}{2^{k-1}} + \frac{1}{2^{k-2}} + \ldots + 1\right).
$$
Using the formula for the sum of a geometric series we obtain
$$
\beta_{k, k} = \frac{1}{2^k} - \frac{3}{2\left(7\cdot 2^k - 3\right)}\left(\frac{2^k-1}{2^{k-1}}\right),
$$
then by performing a multiplication of fractions we get
$$
\beta_{k, k} = \frac{1}{2^k} - \frac{3\cdot 2^k - 3}{2^k\left(7\cdot 2^k - 3\right)},
$$
by reduction to a common denominator it implies that
$$
\beta_{k, k} = \frac{7\cdot 2^k - 3 - 3\cdot 2^k + 3}{2^k\left(7\cdot 2^k - 3\right)},
$$
so we finally obtain
$$
\beta_{k, k} = \frac{4}{7\cdot 2^k - 3}.
$$

Now consider a sequence of $2^k$ $(G,n)$-packings $S_1, S_2, \ldots, S_{2^k}$ obtained from Claim~\ref{claim_leaf_assignment}. Let $K:=G \setminus I$. For each vertex $v$ from $I$, let $S_{i'}$ be the unique packing in which it appears, and pick a neighbor $n_v\in K$ of $v$ such that $v$ and $n_v$ belong to the same bundle of $S_{i'}$. 
For a vertex $w$ in $K$ define $N_w:=\lbrace v\in I: w=n_v\rbrace$. Now define a sequence of nonnegative utility functions $u^\prime_1, u^\prime_2, \ldots, u^\prime_n$ on vertices of $K$ such that $u^\prime_i(w)=u_i(w)+u_i(N_w)$. Note that for each $i$, $S_i$ induces the partition $S^\prime_i = \lbrace P_j \cap K: P_j \in S_i \rbrace$ of $K$. By the definition of $u^\prime_i$, for each $P_j\in S_i$ we have $u^\prime_i(P_j\cap K)=u_i(P_j)$, so by (i) from Claim~\ref{claim_leaf_assignment} we conclude that $\mms^{(n)}(K, u^\prime_i)\geq \frac{4}{7\cdot 2^k - 3} \mms^{(n)}({G}, u_i)$. By Theorem~\ref{thm:completegraph34}, it follows that there is a $\frac{3}{4}$-mms-allocation\footnote{Note that a $(\frac{3}{4} + \frac{3}{3836})$-mms-allocation given by Theorem~\ref{thm:completegraph34} is also a $\frac{3}{4}$-mms-allocation. We ignore the $\frac{3}{3836}$ term in calculations, as it makes the final constant only slightly better but more complicated.} $(A^\prime_1, A^\prime_2, \ldots, A^\prime_n)$ of $K$ (with respect to utility functions $u^\prime_i$), so for each $i$ we have $u^\prime_i(A'_i)\geq \frac{3}{4} \mms^{(n)}(K, u^\prime_i) \geq \frac{3}{7\cdot 2^k - 3} \mms^{(n)}({G}, u_i)$. 

Consider now the allocation $(A_1, A_2, \ldots, A_n)$ such that $A_i=A^\prime_i \cup \bigcup_{w\in A^\prime_i}N_w$. Clearly $u_i(A_i) = u^\prime_i(A^\prime_i)$, and by the previous inequality it follows that $u_i(A_i) \geq \frac{3}{7\cdot 2^k - 3} \mms^{(n)}({G}, u_i)$. Therefore, $(A_1, A_2, \ldots, A_n)$ is a $\frac{3}{7\cdot 2^k - 3}$-mms-allocation of $G$ (with respect to utility functions $u_i$), which completes the proof.
\end{proof}

The following corollary is an immediate consequence of Theorem~\ref{thm:splitintro}. 
\begin{corollary}
For every connected split graph and every set of agents of $p>1$ types, there exists a $\frac{3}{14p-17}$-mms-allocation.
\end{corollary}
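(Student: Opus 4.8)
The plan is to obtain the corollary directly from Theorem~\ref{thm:splitintro} by choosing the exponent $k$ as small as the number of types allows. Given a connected split graph and a set of agents of $p>1$ types, set $k:=\lceil \log_2 p\rceil$, so that $p\le 2^k$; in particular, a set of agents of $p$ types is also a set of agents of at most $2^k$ types. Applying Theorem~\ref{thm:splitintro} with this value of $k$ then yields a $\frac{3}{7\cdot 2^k-3}$-mms-allocation for the graph and the given agents.

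It remains to verify that such an allocation is in fact a $\frac{3}{14p-17}$-mms-allocation, i.e. that $\frac{3}{7\cdot 2^k-3}\ge \frac{3}{14p-17}$. Since both denominators are positive, this is equivalent to $7\cdot 2^k-3\le 14p-17$, which in turn is equivalent to $2^k\le 2(p-1)$. This single elementary inequality is the only real content: from $k=\lceil \log_2 p\rceil$ and $p\ge 2$ we get $2^{k-1}<p$, and because $2^{k-1}$ and $p$ are integers this strict inequality upgrades to $2^{k-1}\le p-1$, hence $2^k\le 2(p-1)$, as needed.

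Putting the two steps together: the allocation produced by Theorem~\ref{thm:splitintro} guarantees each agent at least $\frac{3}{7\cdot 2^k-3}\ge \frac{3}{14p-17}$ of her maximin share, so it is a $\frac{3}{14p-17}$-mms-allocation, proving the corollary. The main (and essentially the only) obstacle is the arithmetic bound $2^{\lceil \log_2 p\rceil}\le 2(p-1)$, and that reduces immediately to the integrality observation $2^{\lceil \log_2 p\rceil-1}\le p-1$; everything else is a direct invocation of the theorem.
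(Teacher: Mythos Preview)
Your proposal is correct and follows essentially the same route as the paper: define $k$ so that $2^{k-1}<p\le 2^k$ (your $k=\lceil\log_2 p\rceil$ is exactly the paper's ``unique integer with $2^{k-1}+1\le p\le 2^k$''), apply Theorem~\ref{thm:splitintro}, and then verify $7\cdot 2^k-3\le 14p-17$ via $2^{k-1}\le p-1$. Your write-up is slightly more explicit about the arithmetic, but the argument is identical.
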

\begin{proof}
We define $k$ to be the unique integer such that $2^{k-1}+1\leq p\leq 2^k$. By Theorem~\ref{thm:splitintro}, there exists a $\frac{3}{7\cdot 2^k - 3}$-mms-allocation. Since $\frac{3}{7\cdot 2^k - 3}\geq\frac{3}{14p-17}$, this allocation is a $\frac{3}{14p-17}$-mms-allocation.
\end{proof}

\section{Final remarks and open problems}
It is worth noting that all previous results on approximate allocations that guarantee agents a positive fraction of
their maximin shares concerned graphs of goods without large induced stars. The presence of such stars in a graph seems to be a significant obstacle in proving approximation results of this kind. In contrast, all graph classes considered in this paper contain arbitrarily large induced stars. Thus, we believe that our main contribution lies not in showing approximation bounds for some specific graphs classes, but rather in developing a new set of  tools and techniques that can be used beyond previously considered settings.

Since mms-allocation always exists when the graph of goods is a tree, it seems to be natural to study approximate mms-allocations for graphs of bounded treewidth. In a tree, every inner vertex is a cut-vertex, which significantly restricts the way how the tree can be partitioned into connected subgraphs. As a consequence, in maximin allocations, agents have ``low expectations'' and are therefore easier to satisfy.
For graphs of bounded treewidth, the situation is similar. Instead of cut-vertices, we have a hierarchical structure of ``small'' separators. Still, only a bounded number of bundles can cross any edge in a tree decomposition. However, the problem remains non-trivial. First, ``bounded number of bundles'' is much more general than ``one bundle.'' Second, the structure of the separator itself seems to play an important role. 
Thus, a promising orthogonal direction for further study is to consider, for example, chordal graphs, where all separators are cliques.

The results proved in this paper are existential and do not yield polynomial-time algorithms for finding allocations.
We point out that they can be algorithmized in the following limited way: given an input graph of goods and, for each agent, her utility function and a corresponding mms-partition, in polynomial time we can compute the desired approximate allocation.
The bottleneck in fully algorithmizing the problem, i.e., not requiring that an mms-partition is given in the input, is the notorious hardness of computing mms-partitions. It might be possible to circumvent this issue but, for the moment, we do not know how to do it for the classes of graphs that we consider in this paper. 

In this paper we proved that, for the class of block-cactus graphs and complete multipartite graphs, there exists a constant $\alpha>0$ such that for any connected graph of goods in this class and arbitrary agents there is an $\alpha$-mms-allocation. In the Introduction we posed  a question (Question~\ref{quest1}) whether such a constant exists for the class of all graphs. 
The answer to this question remains unknown, not only for arbitrary agents but also for agents with a fixed number $k>1$ of agent types. If all agents are of the same type, then an mms-allocation always exists for any connected graph of goods (see Lonc and Truszczynski~\cite{lt:20}).  This leads us to ask the following question.
\begin{question}\label{quest2}
Is there a constant $\alpha>0$ such that for any connected graph of goods and any set of agents of $2$ types, there exists an $\alpha$-mms-allocation?
\end{question}
Finally, we do not know much about upper bounds for the constant $\alpha$ appearing in Questions~\ref{quest1} and~\ref{quest2}, even for restricted classes of graphs.
For a class $\cG$, let $\alpha_{\cG}$ denote the largest possible value of  $\alpha$ that would guarantee the existence of an $\alpha$-mms allocation for any graph in $\cG$.
Feige et al.~\cite{fst:22} proved that if $\cG$ is the class of complete graphs, then $\alpha_{\cG} \leq \frac{39}{40}$. Lonc and Truszczynski~\cite{lt:20} showed that if $\cG$ is the class of cycles, then $\alpha_\cG \leq \frac{3}{4}$.  
These results imply upper bounds for $\alpha_{\cG}$  for the classes $\cG$ of graphs considered in this paper. Indeed, the classes of block graphs, complete multipartite graphs, and split graphs contain all complete graphs, and the class of cactus graphs contains all cycles.
It would be interesting to narrow the gaps in the bounds on $\alpha_{\cG}$ (if possible) in these cases.

\newpage

\bibliography{mms-biblio}

\end{document}